	\def\version{arxiv}
	\def\edition{final}
\def\arxivversion{arxiv}
\def\localversion{local}
\def\lncsversion{lncs}
\newcommand\ifarxiv[2][]{\ifthenelse{\equal{\version}{\arxivversion}}{#2}{#1}}
\newcommand\iflocal[2][]{\ifthenelse{\equal{\version}{\localversion}}{#2}{#1}}
\newcommand\iflncs[2][]{\ifthenelse{\equal{\version}{\lncsversion}}{#2}{#1}}
\def\draftedition{draft}
\def\finaledition{final}
\newcommand\ifdraft[2][]{\ifthenelse{\equal{\edition}{\draftedition}}{#2}{#1}}
\newcommand\iffinal[2][]{\ifthenelse{\equal{\edition}{\finaledition}}{#2}{#1}}
  \pgfplotsset{compat=1.12}
  \newcommand{\pgfplotstableread}[2]{}
\tikzset{
	external/export=true,
	external/mode=list and make
}
\newcommand{\plotref}[1]{%
  \ref{#1}%
  \tikzexternalenable
}
\providecommand*{\tikzname}[1]{%
  \ifdraft{\marginnote{%
	  \rlap{\textsmaller{\texttt{\detokenize{#1}}}}%
  }}{}
  \tikzsetnextfilename{#1}%
}
	\newtheoremstyle{proofstyle}%
	  {\item[\theorem@headerfont\hskip\labelsep ##1\theorem@separator]}%
	  {\item[\theorem@headerfont\hskip\labelsep ##1 of ##3\theorem@separator]}
\title{
  A Practical and Worst-Case Efficient Algorithm for Divisor Methods of Apportionment
}
	\author{%
		Raphael Reitzig%
		\footnote{%
			Department of Computer Science, 
			University of Kaiserslautern;
			\texttt{\{reitzig,\,wild\}\,@\,cs.uni-kl.de}
		} 
	\and 
		Sebastian Wild%
		\footnotemark[1]%
	}
		\def\and{and }
		\def\footnote#1{}
\begin{document}

\maketitle

\begin{abstract}
	Proportional apportionment is the problem of assigning seats to parties
	according to their relative share of votes. Divisor methods are the de"=facto
	standard solution, used in many countries.
	
	In recent literature, there are two algorithms that implement divisor methods:
	one by \textcite{Cheng2014} has worst"=case optimal running time but is	complex, 
	while the other~\cite{Pukelsheim2014} is relatively simple and fast in practice 
	but does not offer worst"=case guarantees.
	
	We demonstrate that the former algorithm is much slower than the other in 
	practice and propose a novel algorithm that avoids the shortcomings of both. 
	We investigate the running"=time behavior of the three contenders in order to
	determine which is most useful in practice.
\end{abstract}

\section{Introduction}

The problem of proportional apportionment arises whenever we have a finite
supply of $k$ indivisible, identical resource units which we have to 
distribute across $n$ parties \emph{fairly}, that is according to the proportional
share of publicly known and agreed"=upon values $v_1, \ldots, v_n$ (of the sum
$V = \sum v_i$ of these values). 
We elaborate in this section on applications of and solutions for this problem.

\ifarxiv[\medskip\noindent]{\clearpage}%
Apportionment arises naturally in politics. 
Here are two prominent examples:
\begin{itemize}
\item
	In a proportional"=representation electoral system 
	we have to assign seats in parliament to political parties 
	according to their share of all votes.
	
	The resources are seats, and the values are vote counts.
\item
	In federal states the number of representatives from each component state 
	often reflects the population of that state, even though
	there will typically be at least one representative for any state no matter 
	how small it is.
	
	Resources are again seats, values are the numbers of residents.
\end{itemize}
In order to use consistent language throughout this article, 
we will stick to the first metaphor.
That is, we assign $k$~\emph{seats} to \emph{parties}~$[1..n]$ 
proportionally to their respective \emph{votes}~$v_i$, 
and we call~$k$ the \emph{house size}.

A fair allocation should assign $v_i/V$ seats to party $i$,
where $V = v_1 + \cdots + v_n$ is the total vote count of all parties.
In case of electoral systems which exclude parties below a certain threshold of 
overall votes from seat allocation altogether, 
we assume they have already been removed from our list of $n$ parties.

As seats are indivisible, this is only possible if, by chance, 
all $v_i/V$ are integers; 
otherwise we have to come up with some rounding scheme.
This is where \emph{apportionment methods} come into play.
The books by \textcite{BalinskiYoung2001} and \textcite{Pukelsheim2014}
give comprehensive introductions into the topic with its historical,
political and mathematical dimensions.

Mathematically speaking, an apportionment method is a function 
$\appfunc : \R_{>0}^n \times \N \to \N_0^n$ that maps
vote counts $\vect v = (v_1,\ldots,v_n)$ and house size~$k$ to
a seat allocation $\vect s = (s_1,\ldots,s_n) \ce \appfunc(\vect v, k)$
so that $s_1 + \cdots + s_n = k$.
We interpret $\vect s$ as party $i$ getting $s_i$ seats.

There are many conceivable such methods, 
but there are at least three natural properties one would like 
apportionment systems to have:
\begin{enumerate}[label=(P\arabic*),leftmargin=3em,itemsep=1ex]
\item
	\label{prop:vote-monotonicity}
	\textsl{Pairwise vote monotonicity:} 
	When votes change, \appfunc should not take away seats from a party 
	that has gained votes while at the same time awarding seats to one 
	that has lost votes.
\item
	\label{prop:house-monotonicity}
	\textsl{House monotonicity:}
	\appfunc should not take seats away from any party when
	the house grows (in number of seats) but votes do not change.
\item
	\label{prop:quota-rule}
	\textsl{Quota rule:}
	The number of seats of each party should be its proportional share, rounded either up or down.
\end{enumerate}
\Citeauthor{BalinskiYoung2001} have shown that
\begin{itemize}
  \item \ref{prop:vote-monotonicity} implies \ref{prop:house-monotonicity}
    \cite[Cor.\ 4.3.1]{BalinskiYoung2001},
  
  \item no method can always guarantee \ref{prop:vote-monotonicity} \emph{and}
    \ref{prop:quota-rule} \cite[Thm.\ 6.1]{BalinskiYoung2001}, and
  
  \item \ref{prop:vote-monotonicity} holds exactly for \emph{divisor methods} \cite[Thm.\ 4.3]{BalinskiYoung2001}.
\end{itemize}
Property~\ref{prop:vote-monotonicity} is essential for upholding the principle of
``one-person, one-vote'', an ideal pursued by electoral systems around the globe 
and occasionally enforced by law~\cite[Section~2.4]{Pukelsheim2014}.
Therefore, divisor a.\,k.\,a.\ Huntington methods can be the only choice, 
for the price of \ref{prop:quota-rule}.
Other choices can be made, of course; the aforementioned books~\cite{BalinskiYoung2001,Pukelsheim2014} discuss different trade"=offs.

Divisor methods are characterized by \emph{divisor sequences} which control
the notion of ``fairness'' implemented by the respective method. 
There are many popular choices (cf.\ \wref{tab:divisor-sequences}).
It is not per se clear which divisor sequence is the best; there still seems
to be active discussion, e.\,g., for the U.\,S.\ House of Representatives.
One reason is that no-one has yet been able to propose a convincing, universally 
agreed"=upon mathematical criterion that would single out one method as superior 
to the others. In fact, there are competing notions of fairness, 
each favoring a different divisor method~\cite[Section~A.3]{BalinskiYoung2001}.
A reasonable approach is therefore to run computer simulations of different methods
and compare their outcomes empirically, for example w.\,r.\,t.\ the distribution
of final average votes per seat $v_i/s_i$.
For this purpose, many apportionments have to be computed, 
so efficient algorithms can become an issue.

\begin{table}[tb]
	\begin{center}%
		\def\nobounds{---}%
		\def\rowsep{1ex}
		\def\mslcases{\tcases{2x+1}{x\ge1}{1.6x+1.4}{x<1}}%
		\begin{adjustbox}{max width=\linewidth}
		  \begin{tabular}{p{4cm}lcc}
			  \toprule
			    \bfseries{Method}        
			  & \bfseries{Divisor Sequence} 
			  & \bfseries{\boldmath $\delta(x)$}    
			  & \bfseries Sandwich               \\ 
			  \midrule%
			
			    Smallest divisors
			  & 0, 1, 2, 3, \ldots
			  & $x$ 
			  & \nobounds \\[\rowsep]%
			
			    Greatest divisors
			  & 1, 2, 3, 4, \ldots 
			  & $x+1$
			  & \nobounds \\[\rowsep]%
			
			    Sainte-Laguë
			  & 1, 3, 5, 7, \ldots
			  & $2x+1$ 
			  & \nobounds \\[\rowsep]%
			
			    Modified Sainte-Laguë
			  & 1.4, 3, 5, 7, \ldots  
        & \;\;$\mslcases$\;\;
        & $2x+\frac65 \pm\frac15$ \\[\rowsep]%
        
			    Equal Proportions
			  & 0, $\sqrt2$, $\sqrt6$, $\sqrt{12}$, \ldots
			  & $\sqrt{x(x+1)}$
			  & $\phantom{2}x+\frac14\pm\frac14$ \\[\rowsep]%
			
			    Harmonic Mean
			  & 0, $\frac43$, $\frac{12}5$, $\frac{24}7$, \ldots
			  & $\frac{2x(x+1)}{2x+1}$    
			  & $\phantom{2}x+\frac14\pm\frac14$ \\[\rowsep]%
			
			    Imperiali
			  & 2, 3, 4, 5, \ldots
			  & $x+2$
			  & \nobounds \\[\rowsep]%
			
			    Danish
			  & 1, 4, 7, 10, \ldots
			  & $3x+1$
			  & \nobounds \\%
			
			  \bottomrule
		  \end{tabular}%
		\end{adjustbox}
	\end{center}
	\caption{%
		Commonly used divisor methods \cite[Table~1]{Cheng2014}.
		For each of the methods, we give a possible continuation 
		$\delta$ of the respective divisor sequence (cf.\ \wref{sec:notation}) 
		as well as linear sandwich bounds on $\delta$, if non-trivial
		(cf.\ \wref{lem:candidate-set-linear-corridor}).
	}
	\label{tab:divisor-sequences}
\end{table}

We thus study the problem of computing the final seat allocation by divisor 
methods (given by their divisor sequences) according to vote counts 
and house size. 

For the case of almost linear divisor sequences, the problem can be solved
in time $\Oh(n)$; this has been shown by \textcite{Cheng2014} who propose
a worst"=case running"=time"=optimal algorithm which we call \CEalg. 
It is quite involved and rather difficult to implement 
(cf.\ \wref{app:CE-main-procedure}).

\Textcite{Pukelsheim2014}, on the other hand, proposes algorithm \Pukalg whose 
running time is not asymptotically optimal in the worst case but tends
to perform well in practice, at least if some insight about the used 
divisor sequence is available and inputs are good"=natured 
(cf.\ \wref{app:jump-and-step-review}).

After introducing divisor methods formally in \wref{sec:notation},
we propose a new algorithm in \wref{sec:fast-selection-alg} that also attains
the $\Oh(n)$ worst"=case running time bound but is straight"=forward to implement
\emph{and} efficient in practice as well.
It is based on a generalization of our solution for the envy"=free 
stick"=division problem~\cite{ReitzigWild2015}.

We finally compare the performance of the three contending algorithms with extensive
running time experiments, an executive summary of which we give in
\wref{sec:algorithms-comparison}.

Additional material includes an index of notation in \wref{app:notations}.

\section{Divisor Methods Formalized}
\label{sec:notation}

Let $d = (d_j)_{j=0}^\infty$ be an arbitrary divisor sequence, i.\,e.\ a
nonnegative, strictly increasing and unbounded sequence of real numbers.
We formally set $d_{-1} \ce -\infty$.

We require that there is a smooth continuation of $d$ on the reals 
which is easy to invert. 
That is, we assume a function $\delta : \R_{\ge 0} \to \R_{\ge d_0}$ with
\begin{enumerate}[label=\roman*)]
	\item\label{item:delta-continc}%
	  $\delta$ is continuous and strictly increasing,
	\item\label{item:deltainv-constant}%
	  $\delta^{-1}(x)$ for $x\ge d_0$ can be computed 
		with a constant number of arithmetic operations, and
	\item\label{item:delta-is-d}%
	  $\delta(j) = d_j$ (and thus $\delta^{-1}(d_j) = j$) for all $j \in \N_0$.
\end{enumerate}
All the divisor sequences used in practice fulfill these requirements;
cf.\ \wref{tab:divisor-sequences}.
For convenience, we continue $\delta^{-1}$ on the complete real line requiring
\begin{enumerate}[label=\roman*),start=4]
  \item\label{item:delta-inverse-below-d0}%
	  $\delta^{-1}(x) \in [-1,0)$ for $x < d_0$.
\end{enumerate}

\begin{corollary}\label{cor:delta-inverse-monotonicity}
  Assuming \ref{item:delta-continc} to \ref{item:delta-inverse-below-d0},
	$\delta^{-1}(x)$ is continuous and strictly increasing on $\R_{\ge d_0}$.
	Furthermore, it is the inverse of $j \mapsto d_j$ in the sense that
	  \[   \lfloor \delta^{-1}(x) \rfloor
	     = \max \{ j \in \Z_{\geq -1} \mid d_j \leq x \} \]
	for all $x \in \R$.
\qed\end{corollary}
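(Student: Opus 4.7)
The plan is to split the statement into two parts. First I would establish continuity and strict monotonicity of $\delta^{-1}$ on $\R_{\ge d_0}$. By \ref{item:delta-continc}, $\delta$ is continuous and strictly increasing on $\R_{\ge 0}$, with $\delta(0)=d_0$. Since $\delta(j)=d_j$ and $(d_j)$ is unbounded, $\delta$ itself is unbounded, so the intermediate value theorem yields that $\delta : \R_{\ge 0} \to \R_{\ge d_0}$ is a bijection. A standard calculus fact then gives that the inverse of a continuous strictly increasing bijection between intervals is itself continuous and strictly increasing, which is exactly the first claim.

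For the floor identity I would do a case split on $x$. If $x < d_0$, assumption \ref{item:delta-inverse-below-d0} forces $\delta^{-1}(x) \in [-1,0)$, hence $\lfloor \delta^{-1}(x) \rfloor = -1$; on the other hand $d_{-1} = -\infty \le x$ while $d_0 > x$, and by strict monotonicity of $(d_j)$ no $j \ge 0$ satisfies $d_j \le x$, so the maximum in the set is also $-1$. If $x \ge d_0$, set $j^* := \lfloor \delta^{-1}(x) \rfloor$, which is $\ge 0$ since $\delta^{-1}(x) \ge \delta^{-1}(d_0) = 0$. Then $j^* \le \delta^{-1}(x) < j^*+1$; applying the strictly increasing function $\delta$ and using \ref{item:delta-is-d} yields $d_{j^*} \le x < d_{j^*+1}$. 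Strict monotonicity of $(d_j)$ then shows $d_j > x$ for all $j > j^*$, so $j^*$ is the maximum of the set, as required.

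The main obstacle, if any, is just being careful around the seam where $\delta^{-1}$ switches from being defined by \ref{item:delta-inverse-below-d0} on $(-\infty,d_0)$ to being the genuine inverse of $\delta$ on $[d_0,\infty)$, and making sure the convention $d_{-1}=-\infty$ is used consistently so that $j=-1$ is always a valid member of the set under consideration. The remaining arguments are routine continuity-and-monotonicity bookkeeping.
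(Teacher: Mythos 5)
Your proof is correct and complete; the paper itself states this corollary without proof (marking it as immediate from assumptions i)--iv)), and your argument — the standard inverse-function fact for the first claim plus the case split at $d_0$ using the convention $d_{-1}=-\infty$ for the floor identity — is exactly the routine verification the authors leave implicit.
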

In particular, $\lfloor \delta^{-1}(x) \rfloor = j$ for $d_j \leq x < d_{j+1}$
so the floored $\delta^{-1}$ is the (zero"=based) \emph{rank} function for the
set of all $d_j$ as long as $x \geq d_0$.

Note how this reproduces what is called \emph{$d$"/rounding} in the
literature~\cite{BalinskiYoung2001,Pukelsheim2014}; we obtain an efficient way 
of calculating this function via $\delta^{-1}$.

Now the set of all seat assignments that are valid w.\,r.\,t.\ $d$ 
is given by~\cite{BalinskiYoung2001}
  \[ \mathcal{S}(\vect v, k) 
       \wwrel= \Bigl\{ \vect s \in \N_0^n \ \Bigm|\  
              \sum_{i=1}^n s_i = k 
                \wrel\land 
              \exists\,a > 0.\ \forall\,i \in [1..n].\ 
                s_i \in \lfloor \delta^{-1}(v_i \cdot a) \rfloor + \{0,1\} 
         \Bigr\}. \]
We call a realization of~$a$ \emph{proportionality constant}~$a^*$;
intuitively, every seat corresponds to roughly $\nicefrac{1}{a^*}$ votes.

An equivalent definition is by the set of possible results of the following
algorithm~\cite[Prop.\ 3.3]{BalinskiYoung2001}.
\begin{algorithm}\label{alg:highest-averages}
  $\textsc{\highestavgsalgASCII}_d(\vect v, k)$ :
  \begin{enumerate}[label={\textsf{\textbf{Step~\arabic*}}},ref=\arabic*,leftmargin=4.5em,itemsep=0ex]
    \item\label{step:highest-averages-init}%
      Initialize $\vect s = 0^n$.
      
    \item\label{step:highest-averages-while}%
      While $k > 0$,
      \begin{enumerate}[label={\textsf{\textbf{Step~\arabic{enumi}.\arabic*}}},ref=\arabic{enumi}.\arabic*,leftmargin=2em,topsep=0ex,itemsep=0ex]
        \item\label{step:highest-averages-max}%
          Determine $I = \arg\min_{i=1}^n d_{s_i} / v_i$.
        \item\label{step:highest-averages-update}%
          Update $s_{I} \leftarrow s_{I} + 1$ and $k \leftarrow k-1$.
      \end{enumerate}
     
    \item\label{step:highest-averages-return}%
      Return $\vect s$.
  \end{enumerate}
\end{algorithm}
We can obtain a proportionality constant~\cite[59f]{Pukelsheim2014} by
\begin{equation}\label{eq:seats-to-astar}%
  a^* = \max \{ d_{s_i - 1} / v_i \mid 1 \leq i \leq n \},
\end{equation}
which in turn defines the set $\mathcal{S}(\vect v, k)$.

Note that we work with $d_j/v_i$ instead of $v_i/d_j$ in the classical literature;
\textcite{Cheng2014} and we prefer the reciprocals because the case $d_0 = 0$ 
then handles gracefully and without special treatment. 
Therefore, our $a^*$ is also the reciprocal of the proportionality constant as
e.\,g.\ \textcite{Pukelsheim2014} defines it, we multiply by $a$ in 
the definition of $\mathcal{S}$ and we take the minimum in \HAalg.
It is important to note that the defined set $\mathcal{S}$ remains unchanged
by this switch.

Following the notation of \textcite{Cheng2014}, we furthermore define
for given votes $\vect v = (v_1,\ldots,v_n) \in \Q_{>0}^n$ the sets
  \[ A_i 
	     \wrel\ce \Bigl\{ a_{i,j} \Bigm| j=0,1,2,\ldots  \Bigr\}
      	\wwrel{\text{with}} a_{i,j} \ce \frac{d_j}{v_i} \]
and their multiset union
  \[ \mset A \wrel\ce \biguplus_{i=1}^n A_i . \]
As we will see later, the relative rank of elements in $\mset A$ turns out to
be of interest; we therefore define the \emph{rank function} $r(x,\mset A)$ which denotes 
the number of elements in multiset $\mset A$ that are no larger than~$x$, that is
\begin{equation}\label{eq:rank-as-sum}
  r(x,\mset A)
	  \wwrel\ce
	\bigl|\mset A \cap (-\infty,x] \, \bigr|
	  \wwrel=
  \sum_{i=1}^n |\{a_{i,j} \in \mset{A} \mid a_{i,j} \le x\}| . 
\end{equation}
We write $r(x)$ instead of $r(x,\mset A)$ when $\mset A$ is clear from context.

We need two more convenient shorthands: Assuming we have $a^* \leq \overline{x}$,
we denote with
\begin{equation}\label{eq:drop-irrelevant-parties}
  I_{\overline x} \wwrel\ce 
    \bigl\{ i \in \{1,\dots,n\} \mid v_i > d_0/\overline x \bigr \}
\end{equation}
the set of parties that can hope for a seat, and with
\begin{equation}\label{eq:one-sided-candidate-cutoff}
   \mset{A}^{\overline x} 
     \wwrel\ce 
       \biguplus_{i \in I_{\overline x}}
			   \biggl\{ \frac{d_j}{v_i} \in \mset{A}  
			     \biggm| \frac{d_j}{v_i} < \overline x 
			   \biggr\} 
		 \wwrel=
       \biguplus_{i = 1}^n
			   \biggl\{ \frac{d_j}{v_i} \in \mset{A}  
			     \biggm| \frac{d_j}{v_i} < \overline x 
			   \biggr\} 
		 \wwrel=
		   \mset A \cap (-\infty,\overline x) 
\end{equation}
the multiset of elements from sequences of these parties that are smaller 
than $\overline x$, i.\,e.\ reasonable candidates for $a^*$.

\section{Fast Apportionment by Rank Selection}%
\label{sec:fast-selection-alg}

From \wref{eq:seats-to-astar} together with strict monotonicity of $d$,
we obtain immediately that $a^* = \mset{A}_{(k)}$, i.\,e.\ the $k$th~smallest 
element of $\mset{A}$ (counting duplicates) is a suitable proportionality
constant. This allows us to switch gears from the iteration"=based world
of \textcite{Pukelsheim2014} to selection"=based algorithms, as previously
seen by \textcite{Cheng2014}.

Note that even though $\mset A$ is infinite, $\mset A_{(k)}$ always exists
because the terms $a_{i,j} = \nicefrac{d_j}{v_i}$ are strictly increasing in~$j$ 
for all $i \in \{1,\ldots,n\}$.

Borrowing terminology from the field of mathematical optimization,
we call $a$ \emph{feasible} if $r(a) \ge k$, otherwise it is \emph{infeasible}.
Feasible $a \ne a^*$ are called \emph{suboptimal}. Our goal is to find a subset
of $\mset{A}$ that contains $a^*$ but as few infeasible or suboptimal $a$ as
possible; we can then apply a rank"=selection algorithm on this subset and
obtain (via $a^*$) the solution to the apportionment problem.

Now since $d$ is unbounded, setting \emph{any} upper bound $\overline{x}$ on 
the $a_{i,j}$ yields a finite search space $\mset{A}^{\overline{x}}$. 
By choosing any such bound that maintains $|\mset{A}^{\overline{x}}| \geq k$, 
we retain the property that $a^*$ is the $k$th~smallest element under consideration.

One naive way is to make sure that the party with the most votes (which should
get the most seats) contributes at least $k$~values to $\mset A$. This can be
achieved by letting $\overline{x} = d_{k-1} / \max \vect v + \varepsilon$
(cf.\ the proof of \wref{thm:correctness-and-time-for-linear-alg}).
This alone, however, leads only to an algorithm with worst"=case running time 
in $\Th(kn)$, which is worse than even \HAalg (with priority queues).

We can actually not improve this upper bound $\overline x$; it is tight for the
case that one party has many more votes than all others and gets (almost) all
of the seats.
We can, however, exclude many individual elements in $\mset{A}^{\overline x}$ 
because they are too small to be feasible or too large to be optimal.

Towards finding suitable upper and lower bounds on $a^*$, we investigate its
rank in the multiset $\mset{A}$ of all candidates. All we know is that
  \[ k \wwrel\leq r(a^*) \wwrel\leq k + |I_{\overline x}| \]
since we may have any number between one and $|I_{\overline x}|$ parties that
tie for the last seat. 
We can still make an ansatz with 
  $r(\overline a) \geq k + |I_{\overline x}|$ 
and
  $r(\underline a) < k$,
express rank function~$r$ in terms of $\delta^{-1}$ 
(cf.\ \wref{lem:rank-as-sum} in \wref{app:lemmata}) and derive that
\begin{equation}\label{eq:suff-cond-for-sandwich}
    \sum_{i \in I_{\overline x}} 
	    \delta^{-1}(v_i \cdot \underline a) 
	  \wwrel\le k - |I_{\overline x}|
  \qquad\text{and}\qquad
	  \sum_{i \in I_{\overline x}} \delta^{-1}(v_i\cdot\overline a)
	  \wwrel\ge k.
\end{equation}
This pair of inequalities is indeed a sufficient condition for admissible 
pairs of bounds $(\underline a, \overline a)$; we can conclude that 
$\underline a \leq a^* \leq \overline a$. For a formal proof, 
see \wref{lem:astar-corridor} in \wref{app:lemmata}.

We now want to derive a sandwich on $a^*$ by fulfilling the inequalities
in \wref{eq:suff-cond-for-sandwich} as tightly as possible. 
Depending on $\delta^{-1}$, this may be hard to do analytically.
However, we can make the same assumption as \textcite{Cheng2014} and explicitly 
compute suitable bounds for divisor sequences which behave roughly linearly.
This does not limit the scope of our investigation by much;
see \wref{app:methods-scope} for more on this.

\begin{lemma}
\label{lem:candidate-set-linear-corridor}
	Assume the continuation $\delta$ of divisor sequence $d$ fulfills
	\[
		         \alpha x + \underline\beta
		\wrel\le \delta(x) 
		\wrel\le \alpha x + \overline\beta
	\]
	for all $x \in \R_{\ge 0}$ with $\alpha > 0$, $\underline\beta \in [0, \alpha]$
	and $\overline\beta \geq 0$.
	Let further some $\overline x > a^*$ be given. 
	Then, the pair $(\underline a, \overline a)$ defined by
	\[
			\underline a  
		\wrel\ce
			\max\biggl\{0,
				\frac{\alpha k - (\alpha - \underline\beta) \cdot |I_{\overline x}|}%
				     {V_{\overline x}}
			\biggr\}
		\wwrel{\text{and}}
			\overline a
		\wrel\ce 
			\frac{\alpha k + \overline\beta \cdot |I_{\overline x}|}%
			     {V_{\overline x}}
	\]
	with $V_{\overline x} \ce \sum_{i \in I_{\overline x}} v_i$
	fulfills the conditions of \wref{lem:astar-corridor}, that is
	$\underline a \le a^* \le \overline a$.
	Moreover, 
	\[
			\bigl|\mset A \cap [\underline a,\overline a] \big|
		\wrel\le 
			      2\biggl(1 + \frac{\overline\beta - \underline\beta}\alpha \biggr)
			\cdot |I_{\overline x}|.
	\]
\end{lemma}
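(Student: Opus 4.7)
My plan is to derive both claims from a single ``inverted'' sandwich on~$\delta^{-1}$. Applying the hypothesis at $x = \delta^{-1}(y)$ and solving gives
\[
	\frac{y - \overline\beta}{\alpha}
	\wrel\le \delta^{-1}(y)
	\wrel\le \frac{y - \underline\beta}{\alpha}
\]
for all $y \ge d_0$ (and a straightforward check, using $\underline\beta \le \alpha$ and property~\ref{item:delta-inverse-below-d0}, extends the second inequality to all $y \ge 0$, which matters only in the $\underline a = 0$ branch below).

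For the sandwich on~$a^*$, I substitute these inverted bounds directly into the sufficient conditions \wref{eq:suff-cond-for-sandwich} of \wref{lem:astar-corridor}. The upper condition becomes
\[
	\sum_{i \in I_{\overline x}} \delta^{-1}(v_i \cdot \overline a)
	\wrel\ge
	\sum_{i \in I_{\overline x}} \frac{v_i \overline a - \overline\beta}{\alpha}
	\wrel=
	\frac{V_{\overline x}\,\overline a - \overline\beta\,|I_{\overline x}|}{\alpha},
\]
so setting this lower estimate equal to~$k$ and solving for $\overline a$ produces exactly the stated formula. The analogous manipulation with the right-hand inequality of the inverted sandwich yields a candidate value for~$\underline a$; when it turns out negative, replacing it by~$0$ preserves $\underline a \le a^*$ trivially, and otherwise the corresponding estimate from below gives the required $\sum_{i} \delta^{-1}(v_i\underline a) \le k - |I_{\overline x}|$.

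For the cardinality bound I count the $a_{i,j} = d_j/v_i$ landing in $[\underline a, \overline a]$ per party $i \in I_{\overline x}$. These correspond to integer indices $j \in [\lceil \delta^{-1}(v_i\underline a) \rceil,\, \lfloor \delta^{-1}(v_i\overline a) \rfloor]$, whose count is at most $\delta^{-1}(v_i\overline a) - \delta^{-1}(v_i\underline a) + 1$. Applying the inverted sandwich \emph{in opposite directions} on the two $\delta^{-1}$-values yields
\[
	\delta^{-1}(v_i \overline a) - \delta^{-1}(v_i \underline a) + 1
	\wrel\le
	\frac{v_i(\overline a - \underline a)}{\alpha}
	+ \frac{\overline\beta - \underline\beta}{\alpha} + 1.
\]
Summing over $i \in I_{\overline x}$ collapses the first term to $V_{\overline x}(\overline a - \underline a)/\alpha$; plugging in the explicit formulas for $\overline a, \underline a$ in the interior case gives $(\overline a - \underline a)V_{\overline x} = (\alpha + \overline\beta - \underline\beta)\,|I_{\overline x}|$, so the total telescopes to $2\,|I_{\overline x}|\bigl(1 + (\overline\beta - \underline\beta)/\alpha\bigr)$.

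The main obstacle I expect is the bookkeeping for the boundary case $\underline a = 0$ and for parties outside $I_{\overline x}$: in that regime one has $k < |I_{\overline x}|$, which I use to check that the estimate $V_{\overline x}\,\overline a / \alpha \le (1 + \overline\beta/\alpha)\,|I_{\overline x}|$ still fits under the claimed bound; and parties $i \notin I_{\overline x}$ satisfy $d_0/v_i \ge \overline x$, so together with the use-case assumption $\overline a \le \overline x$ they contribute no element to $[\underline a, \overline a]$. Apart from these case distinctions, the proof is routine algebra once the inverted sandwich on $\delta^{-1}$ is in place.
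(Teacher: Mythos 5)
Your proof follows essentially the same route as the paper's: invert the linear sandwich on $\delta$ to bracket $\delta^{-1}$ between $\frac{y-\overline\beta}{\alpha}$ and $\frac{y-\underline\beta}{\alpha}$, verify the two conditions of \wref{lem:astar-corridor} by direct substitution and solving for $\overline a$ resp.\ $\underline a$, and bound the candidate count per party by $\delta^{-1}(v_i\cdot\overline a)-\delta^{-1}(v_i\cdot\underline a)+1$ combined with the identity $(\overline a-\underline a)\,V_{\overline x}=(\alpha+\overline\beta-\underline\beta)\cdot|I_{\overline x}|$. Your explicit handling of the $\underline a=0$ branch and of parties outside $I_{\overline x}$ is sound and in fact slightly more careful than the paper's write-up, which leaves those boundary cases implicit.
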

The proof consists mostly of rote calculation towards applying \wref{lem:astar-corridor};
see \wref{app:lemmata} for the details.

We have now derived our main improvement over the work by \textcite{Cheng2014}; 
where they have only a one"=sided bound on $a^*$ and thus have to employ an 
involved search on $\mset A$, we have sandwiched $a^*$ from both sides, and so 
tightly that the remaining search space is small enough for a simple rank selection 
to be efficient.

Building on the bounds from \wref{lem:candidate-set-linear-corridor},
we can improve upon the naive idea using only $\overline x$ 
by excluding also many more elements from $\mset A$ which are for sure not $a^*$. 
Since we remove in particular too small elements, this means that we also have 
to modify the rank we select; we will see that our bounds are chosen so that we 
can use $\delta^{-1}$ to \emph{count} the number of elements we discard \emph{exactly}.

Recall that we assume a fixed apportionment scheme, that is fixed $d$ with
known $\alpha$, $\underline\beta$ and $\overline\beta$ as per
\wref{lem:candidate-set-linear-corridor}.
\begin{algorithm}
  \label{alg:selection-algo-linear-candidates}
  $\textsc{\RWalgASCII}(\vect v, k)_d$ :
  \begin{enumerate}[label={\textsf{\textbf{Step~\arabic*}}},ref=\arabic*,leftmargin=4.5em,itemsep=0ex]
    \item \label{step:linear-selection-find-v1}
	    Find the $v^{(1)} = \max \{ v_1,\ldots,v_n \}$.
	    
    \item \label{step:linear-selection-set-overline-x}
	    Set $\overline x \ce d_{k-1} / v^{(1)} + \varepsilon$
	    for suitable\footnote{%
	      Neither correctness nor $\Th$"/running"=time is affected by the choice
	      of $\varepsilon$ here since it affects only the size of $I_{\overline x}$,
	      which is bounded by $n$ in any case. In particular, the size of 
	      $\hat{\mset A}$ is affected only up to a constant factor. 
	      For tweaking performance in practice,
	      see the proof of \wref{thm:correctness-and-time-for-linear-alg}.} 
	    constant $\varepsilon > 0$.
    
    \item \label{step:linear-selection-compute-I}
	    Compute 
	    $I_{\overline x}$ 
	    as per \wref{eq:drop-irrelevant-parties}.
	    
    \item \label{step:linear-selection-compute-a-bounds}
	    Compute 
			    $\underline a$
	    and 
			    $\overline a$
	    as per \wref{lem:candidate-set-linear-corridor}.
	    
	  \item \label{step:linear-selection-resultinit}
	    Initialize $\hat{\mset A} \ce \emptyset$ and $\hat k \ce k$.
    \item \label{step:linear-selection-compute-A-hat-and-k-hat-loop}
	    For all $i \in I_{\overline x}$, do:%
	    
	    \begin{enumerate}[label={\textsf{\textbf{Step~\arabic{enumi}.\arabic*}}},ref=\arabic{enumi}.\arabic*,leftmargin=2em,topsep=0ex,itemsep=0ex]
	    \item \label{step:linear-selection-compute-jmin-jmax}
		    Compute $\underline j \ce \max\bigl\{0,
									    \bigl\lceil \delta^{-1}( v_i\cdot \underline a ) \bigr\rceil
								    \bigr\}$ and
		    $\overline j \ce \lfloor \delta^{-1}(v_i \cdot \overline a) \rfloor $.
		    
	    \item \label{step:linear-selection-enlarge-A-hat}
		    Add all $d_j / v_i$ to $\hat{\mset A}$
		    for which $\underline j \le j \le \overline j$.
		    
	    \item \label{step:linear-selection-adapt-k-hat}
		    Update $\hat k \leftarrow \hat k - \underline j$. 
	    \end{enumerate}

    \item \label{step:linear-selection-select-A-hat}
	    Select and return $\hat{\mset A}_{(\hat k)}$.
  \end{enumerate}
\end{algorithm}

\begin{theorem}
\label{thm:correctness-and-time-for-linear-alg}
	\wref{alg:selection-algo-linear-candidates} computes $a^*$ in time $\Oh(n)$ 
	for any divisor sequence~$d$ that fulfills the requirements of
	\wref{lem:candidate-set-linear-corridor}.
\end{theorem}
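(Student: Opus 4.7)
The plan is to establish correctness and the $\Oh(n)$ time bound separately.

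For correctness, I would first argue that $a^* < \overline x$: the single party achieving $v^{(1)} = \max\vect v$ already contributes the $k$ values $d_0/v^{(1)}, \ldots, d_{k-1}/v^{(1)}$ to $\mset A$, each at most $d_{k-1}/v^{(1)} < \overline x$, so $a^* = \mset A_{(k)} \le d_{k-1}/v^{(1)} < \overline x$. With this, \wref{lem:candidate-set-linear-corridor} applies and yields both $\underline a \le a^* \le \overline a$ and the linear size bound on $\mset A \cap [\underline a,\overline a]$.

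Next I would show $\hat{\mset A} = \mset A \cap [\underline a,\overline a]$ as multisets. Parties $i \notin I_{\overline x}$ satisfy $a_{i,0} = d_0/v_i \ge \overline x > \overline a$, so their entire sequence lies above $\overline a$ and skipping them in Step~\ref{step:linear-selection-compute-A-hat-and-k-hat-loop} loses nothing. For $i \in I_{\overline x}$, \wref{cor:delta-inverse-monotonicity} together with integrality of $j$ turns $a_{i,j} \in [\underline a,\overline a]$ into precisely $\underline j \le j \le \overline j$ as defined in Step~\ref{step:linear-selection-compute-jmin-jmax}, where the $\max\{0,\cdot\}$ accounts for $v_i\underline a < d_0$. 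A brief case analysis on the position of $v_i\underline a$ among the $d_j$ then shows that $\underline j_i$ equals exactly the number of party-$i$ entries strictly below $\underline a$; summing across $I_{\overline x}$ yields the total count of $\mset A$-elements discarded, so $\hat k$ is the rank of $a^*$ within $\hat{\mset A}$ and Step~\ref{step:linear-selection-select-A-hat} returns $a^*$.

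The running-time bound then follows routinely. Steps~\ref{step:linear-selection-find-v1}--\ref{step:linear-selection-resultinit} take $\Oh(n)$; the loop in Step~\ref{step:linear-selection-compute-A-hat-and-k-hat-loop} runs $|I_{\overline x}| \le n$ iterations with $\Oh(1)$ overhead each (using the constant-cost $\delta^{-1}$ from assumption~\ref{item:deltainv-constant}), and the total number of insertions into $\hat{\mset A}$ is bounded by the linear corridor size $2(1+(\overline\beta-\underline\beta)/\alpha)\cdot|I_{\overline x}| = \Oh(n)$ from \wref{lem:candidate-set-linear-corridor}, with a constant depending only on the fixed divisor scheme. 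Step~\ref{step:linear-selection-select-A-hat} is then a deterministic linear-time rank selection on this $\Oh(n)$-sized multiset.

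The main obstacle I anticipate is the precise rank bookkeeping in the correctness argument: verifying that $\sum_{i \in I_{\overline x}} \underline j_i$ matches the number of $\mset A$-elements strictly below $\underline a$ requires careful handling of ties of the form $v_i\underline a = d_m$ (where the ceiling must still yield the correct count) and of the degenerate case $v_i\underline a < d_0$ (where the $\max\{0,\cdot\}$ guard is indispensable). A secondary subtlety is ensuring that $\overline x$ is chosen strictly greater than $d_{k-1}/v^{(1)}$, since \wref{lem:candidate-set-linear-corridor} requires the strict inequality $\overline x > a^*$, not merely $\ge$.
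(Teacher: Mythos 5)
Your proposal follows essentially the same route as the paper's proof: establishing $a^* \le d_{k-1}/v^{(1)} < \overline x$ via the $k$ elements contributed by the largest party, invoking \wref{lem:candidate-set-linear-corridor} for the sandwich $\underline a \le a^* \le \overline a$ and the $\Oh(|I_{\overline x}|)$ size bound, identifying $\hat{\mset A}$ with $\mset A \cap [\underline a, \overline a]$ via monotonicity of $\delta^{-1}$, verifying $\hat k = r(a^*, \hat{\mset A})$ by counting the $\underline j(i)$ discarded elements per party, and then the routine $\Oh(n)$ accounting with a worst-case linear-time selection. The subtleties you flag (ties at $v_i\underline a = d_m$, the $\max\{0,\cdot\}$ guard, and strictness of $\overline x > a^*$ via $\varepsilon > 0$) are exactly the ones the paper handles, so the plan is sound and matches the published argument.
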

\begin{proof}
First, we have to show that $I_{\overline x}$ as we compute it in
Steps~\ref{step:linear-selection-find-v1}-\ref{step:linear-selection-compute-I} 
is correct. 
We have $\overline x > a^* = \mset A^{\overline x}$ as already 
$r(\overline x - \varepsilon) = r(d_{k-1} / v^{(1)}) \ge k$; 
at least the $k$ elements 
$\frac{d_0}{v^{(1)}},\ldots,\frac{d_{k-1}}{v^{(1)}} \in \mset A$
are no larger than $d_{k-1} / v^{(1)}$. 
We thus never need to consider elements $a\ge \overline x$, and
in particular $\mset A_{(k)} = \smash{\mset A^{\overline x}_{(k)}}$ as
$\mset A^{\overline x} = \mset A\cap (-\infty,\overline x)$.

So far, we have needed no additional restriction on $\varepsilon$ in 
\wref{step:linear-selection-set-overline-x}; we only need it to be positive
so we do not discard $a^*$ by accident if it is exactly $d_{k-1}/v^{(1)}$.
However, the size of $\mset{A}^{\overline x}$ can be arbitrarily large~--
depending on the input values~$v_i$ which we do not want. Therefore, we
require
\begin{equation}\label{eq:suitable-epsilon}
  0 < \varepsilon < \frac{d_k - d_{k-1}}{v^{(1)}};
\end{equation}
such exists because $d$ is strictly increasing.
Note how then $\overline{x} < d_k/v^{(1)}$ so we do not keep any
additional suboptimal values.

From \wref{step:linear-selection-compute-a-bounds} on, we then construct
multiset $\hat{\mset A} \subseteq \mset A$ as the subsequent union of
$A_i \cap [\underline a, \overline a]$, that is
\begin{align*}
		\hat{\mset A}
	&\wwrel=
	  \biguplus_{i \in I_{\overline x}}
	  \biggl\{ 
			\frac{d_j}{v_i} \biggm|  \underline j(i) \le j \le \overline j(i)
		\biggr\} \\
  &\wwrel=
		\biguplus_{i \in I_{\overline x}} 
		\biggl\{ 
			\frac{d_j}{v_i} \in \mset A
			  \biggm|   \delta^{-1}(v_i\cdot \underline a)
							\le j 
							\le \delta^{-1}(v_i \cdot \overline a)
		\biggr\} \\	
  &\wwrel=
		\biguplus_{i\in I_{\overline x}} 
		\biggl\{ 
			\frac{d_j}{v_i} \in \mset A
			  \biggm|   v_i\cdot \underline a
							\le d_j 
							\le v_i \cdot \overline a
		\biggr\} \\
	&\wwrel=
		\biguplus_{i\in I_{\overline x}} 
		\biggl\{ 
			\frac{d_j}{v_i} \in \mset A
			  \biggm| \underline a \le \frac{d_j}{v_i} \le \overline a 
		\biggr\} \\
	&\wwrel=
		\mset A \cap [\underline a, \overline a].
\end{align*}
In particular, the last step follows from \wref{eq:one-sided-candidate-cutoff}
with $\overline{x} > a^*$.
By \wref{lem:candidate-set-linear-corridor}, we know that 
$\underline a \le a^* \le \overline a$ for the
bounds computed in \wref{step:linear-selection-compute-a-bounds},
so we get in particular that $a^* \in \hat{\mset A}$.

It remains to show that we calculate $\hat k$ correctly. Clearly,
we discard with $(a_{i,0}, \ldots, a_{i,\underline j - 1})$ exactly
$\underline{j}$ elements in \wref{step:linear-selection-enlarge-A-hat},
that is $|A_i \cap (-\infty,\underline a)| = \underline{j}(i)$. Therefore,
we compute with
  \[ 
    \hat k 
	= 
		k - \sum_{i\in I_{\overline x}} \bigl| A_i \cap (-\infty,\underline a)\bigr|  
	=
		r(a^*,\mset A) - \bigl|\mset A \cap (-\infty,\underline a)\bigr|
	=  
		r(a^*,\hat{\mset A}) \]
the correct rank of $a^*$ in $\hat{\mset{A}}$.

For the running time, we observe that the computations in 
steps~\ref{step:linear-selection-find-v1} to \ref{step:linear-selection-resultinit} 
are easily done with $\Oh(n)$ primitive instructions.
The loop in \wref{step:linear-selection-compute-A-hat-and-k-hat-loop} and therewith
steps~\ref{step:linear-selection-compute-jmin-jmax} 
and~\ref{step:linear-selection-adapt-k-hat} are executed $|I_{\overline x}| \leq n$ 
times.
The overall number of set operations in \wref{step:linear-selection-enlarge-A-hat}
is $|\hat{\mset A}| \in \Oh(|I_{\overline x}|) \subseteq \Oh(n)$
(cf.\ \wref{lem:candidate-set-linear-corridor}).
Finally, \wref{step:linear-selection-select-A-hat} runs in time
$\Oh(|\hat{\mset A}|) \subseteq \Oh(n)$
when using a (worst"=case) linear"=time rank selection algorithm
(e.\,g., the median"=of"=medians algorithm~\cite{Blum1973}).
\end{proof}
We have obtained a relatively simple algorithm that implements many
divisor methods and has optimal asymptotic running time in the worst case.
It remains to be seen if it is also efficient in practice. 

\section{Comparison of Algorithms}
\label{sec:algorithms-comparison}

We have implemented all algorithms mentioned above in Java \cite{OurCode}
with a focus on clarity and performance.
Reviewing the algorithms resp.\ implementations (cf.\ \wref{app:algorithm-review}), 
we observe that neither \HAalg nor \Pukalg are asymptotically worst"=case 
efficient whereas \CEalg does not seem to be practical regarding implementability. 
\RWalg does not have either deficiency and is still the shortest of the
non"=trivial algorithms.

We evaluate relative practical efficiency by performing running time experiments
on artificial instances; we fix the number of parties~$n$, house size~$k$ and
the used divisor method and draw multiple vote vectors~\vect v at random according
to different distributions. Where possible, we draw votes from a \emph{continuous}
distribution with fixed expectation; this ensures that vote proportions do not
devolve to trivial situations as $n$ grows.

In order to keep the parameter space manageable, we use $n$ as free variable and 
fix $k$ to a multiple of $n$.
For ease of implementation, we restrict ourselves to divisor sequences of the
form $(\alpha j + \beta)_{j \in \N_0}$; this still allows us to cover a range 
of relevant divisor methods at least approximately (cf.\ \wref{tab:divisor-sequences}).
We describe the machine configuration used for the experiments and 
further details of the setup in \wref{app:experiments-machine-spec}.

\wref{fig:running-times-by-n} shows the results of two experiments with practical
parameter choices. It is clear that \Pukalg dominates the field;
of the other algorithms, only \RWalg comes close in performance. 
These observations are stable across many parameter choices; see also
\wref{app:moreplots}.
We will therefore restrict ourselves to \Pukalg and \RWalg in the sequel.
\begin{figure}
	\plaincenter{%
	  \begin{adjustbox}{width=0.9\linewidth}
      \tikzname{runtime_plots_practical}
      \begin{tikzpicture}
  \pgfplotsset{every axis/.append style={
    cycle list={
      {orange,only marks,mark=diamond*,mark options={scale=1}},         
      {green!50!black,only marks,mark=*,mark options={fill=green!35,scale=0.8}}, 
      {green!50!black,only marks,mark=square*,mark options={fill=green!35,scale=0.8}}, 
      {red,only marks,mark=o,mark options={scale=1}},       
      {black,only marks,mark=asterisk,scale=1},                               
      {black,only marks,mark=x,scale=1}                                 
    }
  }}

  \begin{axis}[xlabel={$n$},ylabel=Average running time in \protect\si{\us \per n},
               title={$(\alpha,\beta) = (2,1)$ and $k = 100n$},
               name=plot1]
    \foreach \alg in %
      {SelectAstar,PukelsheimPQ,PukelsheimLS,AStarChengEppstein,%
       IterativeDMPQ,IterativeDMLS} {%
      
      \IfFileExists{plots/\alg_practical_europe.tab}{%
        \addplot 
          table[x=n,y expr=\thisrow{avg-run-ms/n}*1000] 
          {plots/\alg_practical_europe.tab};
        \label{plot:practical_\alg}%
      }{%
        \PackageError{runtime_plots_practical.tikz}{Plot data missing.
          We need file plots/\alg_practical_europe.tab.
          Remember to run script separate_by_alg
        }{}%
      }
    }
  \end{axis}
  
  \begin{axis}[xlabel={$n$},
               title={$(\alpha,\beta) = (1,\nicefrac{3}{4})$ and $k = 10n$},
               name=plot2,at={($(plot1.east)+(1.5cm,0)$)},anchor=west]
    \foreach \alg in %
      {SelectAstar,PukelsheimPQ,PukelsheimLS,AStarChengEppstein} {%
      
      \IfFileExists{plots/\alg_practical_ushor.tab}{%
        \addplot 
          table[x=n,y expr=\thisrow{avg-run-ms/n}*1000] 
          {plots/\alg_practical_ushor.tab};
      }{%
        \PackageError{runtime_plots_practical.tikz}{Plot data missing.
          We need file plots/\alg_practical_ushor.tab.
          Remember to run script separate_by_alg
        }{}%
      }
      
   }
  \end{axis}

\end{tikzpicture}
	  \end{adjustbox}
	}
	\caption{%
		This figure shows average running times of 
		  \protect\RWalg\,\protect\plotref{plot:practical_SelectAstar}, 
  		\protect\CEalg\,\protect\plotref{plot:practical_AStarChengEppstein},
  		\protect\Pukalg with naive~\protect\plotref{plot:practical_PukelsheimLS} resp.\
  		                   priority-queue~\protect\plotref{plot:practical_PukelsheimPQ}
  		                   minimum selection, and
  		\protect\HAalg with naive~\protect\plotref{plot:practical_IterativeDMLS} resp.\  
  		                   priority-queue~\protect\plotref{plot:practical_IterativeDMPQ} 
  		                   minimum selection,
  	normalized by the number of parties~$n$. 
		The inputs are random apportionment instances with vote counts $v_i$
		drawn i.\,i.\,d.\ uniformly from $[1,3]$.
		The numbers of parties~$n$, house size~$k$ and method parameters~$(\alpha, \beta)$
		have been chosen to resemble national parliaments in Europe (left)
		and the U.\,S.\ House of Representatives (right), respectively.
	}
	\label{fig:running-times-by-n}
\end{figure}

Towards understanding what influences the performance of these algorithms the most,
we have investigated how $\Delta_a$ (the number of seats \Pukalg assigns too much, 
i.\,e.\ $k - \sum s_i$) resp.\ $|\hat{\mset{A}}|$ (the number of candidates \RWalg 
selects from) relate to the measured running times. While the connection is clear 
for \RWalg, we need to look at cases where \citeauthor{Pukelsheim2014}'s estimators
are bad; as long as $|\Delta_a| \ll n$, the $\Th(n)$ portions of \Pukalg
dominate. \wref{fig:running-times-by-counter} exhibits such a setting.
\begin{figure}
  \plaincenter{%
    \begin{adjustbox}{width=0.9\linewidth}
      \tikzname{scatter_plots_large_exponential}
      \begin{tikzpicture}
  \pgfplotsset{every axis/.append style={cycle list={
      {purple!100!blue,only marks,mark=x},
      {purple!80!blue,only marks,mark=o},
      {purple!60!blue,only marks,mark=x},
      {purple!40!blue,only marks,mark=o},
      {purple!20!blue,only marks,mark=x},
      {blue,only marks,mark=o},
      {blue!75!orange,only marks,mark=x},
      {blue!50!orange,only marks,mark=o},
      {blue!25!orange,only marks,mark=x},
      {orange,only marks,mark=o}
    }}
  }

  \begin{axis}[xlabel=$|\Delta_a|$,ylabel=Running time in \si{\ms},
               name=plot1]
    \foreach \n in {1000,5000,10000,20000,30000,40000,50000,75000,100000} {%
      
      \IfFileExists{plots/PukelsheimPQ_single_large_exponential_\n.tab}{%
        \addplot
          table[x expr=abs(\thisrowno{14}),y=single-run-ms] 
          {plots/PukelsheimPQ_single_large_exponential_\n.tab};
        \label{plot:scatter_PukelsheimPQ_\n}%
      }{%
        \PackageError{scatter_plots_large_exponential.tikz}{Plot data missing.
          We need file plots/PukelsheimPQ_single_large_exponential_\n.tab.
          Remember to run scripts separate_by_alg and separate_by_n
        }{}%
      }
    }
  \end{axis}
  
  \begin{axis}[xlabel=$|\hat{\mset{A}}|$,
               name=plot2,at={($(plot1.east)+(1.5cm,0)$)},anchor=west]
    \foreach \n in {1000,5000,10000,20000,30000,40000,50000,75000,100000} {%
      
      \IfFileExists{plots/SelectAstar_single_large_exponential_\n.tab}{%
        \addplot
          table[x index=16,y=single-run-ms] 
          {plots/SelectAstar_single_large_exponential_\n.tab};
      }{%
        \PackageError{scatter_plots_large_exponential.tikz}{Plot data missing.
          We need file plots/SelectAstar_single_large_exponential_\n.tab.
          Remember to run scripts separate_by_alg and separate_by_n
        }{}%
      }
    }
  \end{axis}

\end{tikzpicture}
	  \end{adjustbox}
  }
  \caption{%
    Running times on individual inputs plotted against $|\Delta_a|$ for \Pukalg (left)
    resp.\ $|\hat{\mset{A}}|$ for \RWalg (right).
    Inputs are random with exponentially distributed~$v_i$
    for $n \in \{%
      1\,\protect\plotref{plot:scatter_PukelsheimPQ_1000},
      5\,\protect\plotref{plot:scatter_PukelsheimPQ_5000},
      10\,\protect\plotref{plot:scatter_PukelsheimPQ_10000},
      20\,\protect\plotref{plot:scatter_PukelsheimPQ_20000},
      30\,\protect\plotref{plot:scatter_PukelsheimPQ_30000},
      40\,\protect\plotref{plot:scatter_PukelsheimPQ_40000},
      50\,\protect\plotref{plot:scatter_PukelsheimPQ_50000},
      75\,\protect\plotref{plot:scatter_PukelsheimPQ_75000},
      100\,\protect\plotref{plot:scatter_PukelsheimPQ_100000} 
    \} \cdot 10^3$ and $k=5n$; they have been apportioned w.\,r.\,t.\ $(\alpha, \beta) = (2,1)$.
  }
  \label{fig:running-times-by-counter}
\end{figure} 

While \Pukalg is faster than \RWalg in the experiments of \wref{fig:running-times-by-n}
and similar ones, we observe that \RWalg is more robust against changing parameters.
\wref{fig:robustness-check} exhibits this for switching between different vote
distributions: the average running times of \RWalg are close to each other
where those of \Pukalg spread out quite a bit. It may be noteworthy that each
algorithm has one ``outlier'' distribution but they are not the same.
\begin{figure}
  \plaincenter{%
    \begin{adjustbox}{width=0.9\linewidth}
      \tikzname{robustness_plots}
      \begin{tikzpicture}
  \pgfplotsset{every axis/.append style={
    cycle list={
      {orange,only marks,mark=o,mark options={scale=0.8}},         
      {green!50!black,only marks,mark=x,mark options={scale=0.8}}, 
      {blue!75!black,only marks,mark=square,mark options={scale=0.8}}, 
      {red!25!black,only marks,mark=+,mark options={scale=0.8}}
    }
  }}

  \begin{axis}[xlabel={$n$},ylabel=Average running time in \protect\si{\us \per n},
               ymin=0.045,ymax=0.115,
               name=plot1]
    \foreach \dist in {uniform,exp,poisson,pareto} {%
      \IfFileExists{plots/SelectAstar_robustness_\dist.tab}{%
        \addplot 
          table[x=n,y expr=\thisrow{avg-run-ms/n}*1000] 
          {plots/SelectAstar_robustness_\dist.tab};
        \label{plot:robustness_\dist}%
      }{%
        \PackageError{robustness_plots.tikz}{Plot data missing.
          We need file plots/SelectAstar_robustness_\dist.tab.
          Remember to run script separate_by_alg
        }{}%
      }
    }
  \end{axis}
  
  \begin{axis}[xlabel={$n$},
               ymin=0.035,ymax=0.115,
               name=plot2,at={($(plot1.east)+(2cm,0)$)},anchor=west]
    \foreach \dist in {uniform,exp,poisson,pareto} {%
      \IfFileExists{plots/PukelsheimPQ_robustness_\dist.tab}{%
        \addplot 
          table[x=n,y expr=\thisrow{avg-run-ms/n}*1000] 
          {plots/PukelsheimPQ_robustness_\dist.tab};
      }{%
        \PackageError{robustness_plots.tikz}{Plot data missing.
          We need file plots/PukelsheimPQ_robustness_\dist.tab.
          Remember to run script separate_by_alg
        }{}%
      }
    }
  \end{axis}

\end{tikzpicture}
	  \end{adjustbox}
  }
  \caption{%
    Normalized average runtimes of \RWalg (left) and \Pukalg (right)
    on $v_i$ drawn randomly from
      uniform~\protect\plotref{plot:robustness_uniform},
      exponential~\protect\plotref{plot:robustness_exp},
      Poisson~\protect\plotref{plot:robustness_poisson} and
      Pareto~\protect\plotref{plot:robustness_pareto}
    distributions, respectively, and
    with $k=5n$ and $(\alpha, \beta) = (2,1)$.
  }
  \label{fig:robustness-check}
\end{figure}

\Pukalg does indeed seem to outperform \RWalg consistently so far, if not by much
in some cases.
We \emph{have} found a parameterization which, even though it is admittedly
rather artificial, clearly suggests that \Pukalg does indeed have 
$\omega(n)$ worst"=case behavior and that \RWalg can be faster; 
see \wref{fig:pukelsheim-bad-case}. The question after realistic settings for
which this is the case remains open.
\begin{figure}
  \plaincenter{%
    \begin{adjustbox}{width=0.9\linewidth}
      \tikzname{pukelsheim_bad_case}
      \begin{tikzpicture}
  \pgfplotsset{every axis/.append style={
    cycle list={
      {orange,only marks,mark=diamond*,mark options={scale=1}},         
      {green!50!black,only marks,mark=*,mark options={fill=green!35,scale=0.8}}, 
    }
  }}

  \begin{axis}[xlabel={$n$},ylabel=Average running time in \protect\si{\us \per n},
               name=plot1]
    \foreach \alg in %
      {SandwichSelect,PukelsheimPQ} {%
      
      \IfFileExists{plots/\alg_pukelsheim_bad_case_avg.tab}{%
        \addplot 
          table[x=n,y expr=\thisrow{avg-run-ms/n}*1000] 
          {plots/\alg_pukelsheim_bad_case_avg.tab};
        \label{plot:pukelbadcase_\alg}%
      }{%
        \PackageError{pukelsheim_bad_case.tikz}{Plot data missing.
          We need file plots/\alg_pukelsheim_bad_case_avg.tab.
          Remember to run script separate_by_alg
        }{}%
      }
    }
  \end{axis}
  
  \begin{axis}[xlabel={$n$},ylabel={$|\Delta_a|/n$},
               name=plot2,at={($(plot1.east)+(2.5cm,0)$)},anchor=west]
    \foreach \alg in %
      {PukelsheimPQ} {%
      
      \IfFileExists{plots/\alg_pukelsheim_bad_case_trimmed.tab}{%
        \addplot[only marks,mark=x,green!50!black]
          table[x=n,y expr=abs(\thisrow{counter/n})] 
          {plots/\alg_pukelsheim_bad_case_trimmed.tab};
        \label{plot:pukelbadcase_\alg_scatter}%
      }{%
        \PackageError{pukelsheim_bad_case.tikz}{Plot data missing.
          We need file plots/\alg_pukelsheim_bad_case_trimmed.tab.
          Remember to run script separate_by_alg
        }{}%
      }
      
   }
  \end{axis}

\end{tikzpicture}
	  \end{adjustbox}
  }
  \caption{%
    The left plot shows normalized running times of 
      \RWalg\,\protect\plotref{plot:pukelbadcase_SandwichSelect}
    and
      \Pukalg\,\protect\plotref{plot:pukelbadcase_PukelsheimPQ}
    on instances with $k=2n$ and Pareto-distributed $v_i$ for 
    $(\alpha, \beta) = (1.0, 0.001)$.
    The right plot shows that the average of $|\Delta_a|$ seems to converge 
    towards a constant fraction of $n$ in this case.
  }
  \label{fig:pukelsheim-bad-case}
\end{figure}

In summary, we have seen that \RWalg provides good performance in a reliable
way, i.\,e., its efficiency does not depend much on divisor sequence or 
input. On the other hand, \Pukalg is faster on average when good estimators
are available, but can be slower in certain settings.

\section{Conclusion}
We have derived an algorithm implementing divisor methods of apportionment
that is worst"=case efficient, simple and practicable.
As such, it does not have the shortcomings of previously known algorithms. 
Even though it can not usually outperform \Pukalg, its robustness against
changing parameters makes it a viable candidate for use in practice.

\section*{Acknowledgments}
We thank Chao Xu for pointing us towards the work by \textcite{Cheng2014} and
noting that the problem of envy"=free stick"=division~\cite{ReitzigWild2015} 
is related to proportional apportionment as discussed there. 
He also observed that our approach for cutting sticks~-- the core ideas of which 
turned out to carry over to this article~-- could be improved to run in linear 
time.

Furthermore, we owe thanks to an anonymous reviewer whose constructive feedback
sparked broad changes which have greatly improved the article over its first 
incarnation.

	\printbibliography

\clearpage
\appendix
\section{Our Scope of different Methods of Apportionment}%
\label{app:methods-scope}

As we have seen in \wref{sec:notation} there are many possible divisor
sequences.
For our main result (cf.\ page~\pageref{lem:candidate-set-linear-corridor})
we follow \textcite{Cheng2014} and require the sequences to be ``almost''
linear;
we should check that we do not unduly restrict the scope of our investigation.

We refer to the recent reference work by \textcite{Pukelsheim2014} and, by
extension, to \textcite{BalinskiYoung2001}
who classify different divisor methods of apportionment in terms of
\emph{sign"-post} sequences, a concept equivalent to the divisor sequences
we use.
They distinguish these classes of such sequences
(cf.\ \cite[Sections~3.11-12]{Pukelsheim2014}):
\begin{itemize}
  \item \emph{stationary} sign"=posts of the form $s(n) = n-1+r$
    with $r \in (0,1)$;
  \item \emph{power"=mean} sign"=posts defined by
    \begin{align*}
      \tilde{s}_p(0) &= 0, \\
      \tilde{s}_p(n) &=
        \biggl( \frac{(n-1)^p + n^p}{2} \biggr)^{\nicefrac{1}{p}},
    \end{align*}
    for $p \neq -\infty, 0, \infty$;
  \item and special cases
    $\tilde{s}_{-\infty}(n) = n-1$,
    $\tilde{s}_0(n) = \sqrt{(n-1)n}$, and
    $\tilde{s}_{\infty}(n) = n$.
\end{itemize}
It is easy to see that stationary sign"=posts correspond do divisor sequences
$d_j = j + \beta$ with $\beta \in (0,1)$ (up to a shift by one);
as such, \wref{lem:candidate-set-linear-corridor} applies immediateley with
$\alpha = 1$ and $\underline{\beta} = \overline{\beta} = \beta$, and yields a
particularly nice
(and tight, for our choices of $\underline{a}$ and $\overline{a}$)
upper bound on the size of the candidate set $\mset{A}$.
We cover the special cases as well; see  \wref{tab:divisor-sequences} for
the corresponding sandwich bounds.

As for the remaining power"=mean sign"=posts, the trivial bounds
$\underline{\beta} = 0$ and $\overline{\beta} = 1$ already work.
One can apply the power"=mean inequality and use the slightly better bounds for
$p \in \{-\infty, -1,0,1, \infty\}$ as given in \wref{tab:divisor-sequences}.
Even better bounds can be gleaned from observing that
$\tilde{s}_p(n)$ converges to $n - \nicefrac{1}{2}$ from one side, and quickly
so; $\tilde{s}_p(1)$ thus determines either $\underline{\beta}$ or
$\overline{\beta}$ and the other can be chosen as $\nicefrac{1}{2}$.

In summary, our algorithm \RWalg applies to all divisor methods treated by
\textcite{Pukelsheim2014} and \textcite{BalinskiYoung2001}

\section{Lemmata and Proofs}%
\label{app:lemmata}

\begin{lemma}\label{lem:rank-as-sum}
	For rank function $r(x,\mset A)$,
	\[
			r(x,\mset A)
		\wwrel=
			\sum_{i=1}^n \lfloor \delta^{-1}(v_i \cdot x) \rfloor + 1.
	\]
	Moreover, for $x < \overline x$ we have
	\[
			r(x,\mset A)
		\wwrel=
			\sum_{\mathclap{i \in I_{\overline x}}} 
				\lfloor \delta^{-1}(v_i \cdot x) \rfloor + 1
	\]
	with 
	  $I_{\overline x} = \bigl\{i \in \{1,\dots,n\} \mid v_i > d_0/\overline x \bigr\}$.
\end{lemma}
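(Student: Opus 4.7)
The plan is to unfold the definition of the rank function and then count, for each party $i$ separately, how many of its values $a_{i,j} = d_j/v_i$ lie in $(-\infty, x]$. From \wref{eq:rank-as-sum}, it suffices to evaluate $|\{j \in \N_0 \mid d_j/v_i \le x\}|$ for each $i$. Since $v_i > 0$, the condition $d_j/v_i \le x$ is equivalent to $d_j \le v_i \cdot x$, so I can apply \wref{cor:delta-inverse-monotonicity} directly to the threshold $y \ce v_i \cdot x$.

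By that corollary, $\max\{j \in \Z_{\ge -1} \mid d_j \le y\} = \lfloor \delta^{-1}(y) \rfloor$, and since $d$ is strictly increasing, the set $\{j \in \Z_{\ge -1} \mid d_j \le y\}$ is an initial segment of $\Z_{\ge -1}$. Its cardinality restricted to $j \ge 0$ is therefore $\lfloor \delta^{-1}(v_i \cdot x) \rfloor + 1$, provided this quantity is nonnegative. This is the first key step and where property~\ref{item:delta-inverse-below-d0} earns its keep: in the degenerate case $v_i \cdot x < d_0$, the corollary gives $\lfloor \delta^{-1}(v_i \cdot x) \rfloor = -1$, so the expression $\lfloor \delta^{-1}(v_i \cdot x) \rfloor + 1$ correctly evaluates to $0$ without any case distinction. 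Summing over $i \in \{1,\ldots,n\}$ then yields the first claim.

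For the restricted sum, I need to argue that every $i \notin I_{\overline x}$ contributes zero under the hypothesis $x < \overline x$. By \wref{eq:drop-irrelevant-parties}, $i \notin I_{\overline x}$ means $v_i \le d_0 / \overline x$, and combining this with $x < \overline x$ gives $v_i \cdot x < d_0$. As noted above, the corresponding summand $\lfloor \delta^{-1}(v_i \cdot x) \rfloor + 1$ then equals $0$, so dropping these indices from the sum leaves its value unchanged.

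No step presents a real obstacle; the only subtle point is verifying that the $+1$ offset in the summand harmonizes with the convention $\delta^{-1}(x) \in [-1,0)$ for $x < d_0$ from property~\ref{item:delta-inverse-below-d0}, which is exactly what allows the formula to be written uniformly without a piecewise definition.
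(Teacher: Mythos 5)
Your proof is correct and follows essentially the same route as the paper's: both reduce to the per-party count via \wref{eq:rank-as-sum}, identify the largest admissible index with $\lfloor \delta^{-1}(v_i \cdot x)\rfloor$ (via \wref{cor:delta-inverse-monotonicity}), use property~\ref{item:delta-inverse-below-d0} to make the $+1$ offset work uniformly in the degenerate case, and prove the restricted sum by showing the summands for $i \notin I_{\overline x}$ vanish through the same chain $v_i \cdot x < d_0$. No gaps.
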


  \subsection[Proof of Lemma 4]{Proof of \wref{lem:rank-as-sum}}%

  \label{app:proof-rank-as-sum}%
  
	By \wildtpageref[eq.\!]{eq:rank-as-sum}\eqref, it suffices to show that
	  \[  \bigl| \{a_{i,j} \mid a_{i,j} \le x\} \bigr| 
	    = \lfloor \delta^{-1}(v_i \cdot x) \rfloor + 1 \] 
	for each $i \in \{1, \dots, n\}$.
	Now, if $x \ge a_{i,j} = \nicefrac{d_j}{v_i}$ for some $j$,
	then $v_i \cdot x \ge d_j$, and	so $\lfloor \delta^{-1}(v_i\cdot x) \rfloor$ 
	is the largest index $j'$ for which	$a_{i,j'} = d_{j'}/v_i \le x$. 
	As $d_j$ is zero-based, there are $j'+1 \geq 1$ such elements $a_{i,j} \le x$ 
	and the equation follows.
	
	Otherwise, that is $a_{i,j} > x$ for all $j$, we have 
	$j' = \lfloor \delta^{-1}(v_i\cdot x) \rfloor = -1$ 
	by~\ref{item:delta-inverse-below-d0} and \wref{cor:delta-inverse-monotonicity}
	and the equality holds with $0$ on both sides.
	
	For the second equality, we only have to show that the omitted summands 
	are zero. So let $i \notin I_{\overline x}$ be given, that is 
	$v_i \le d_0/\overline x$. For $x < \overline x$, we have
	\[
			v_i \cdot x 
		\wwrel\le
			\frac{d_0}{\overline x} \cdot x
		\wwrel< 
			\frac{d_0}{\overline x} \cdot \overline x 
		\wwrel= 
			d_0,
	\]
	and hence $\lfloor \delta^{-1}(v_i \cdot x) \rfloor = -1$ 
	by \ref{item:delta-inverse-below-d0}. 

\begin{lemma}
  \label{lem:astar-corridor}
	Let $\overline x > a^*$ and
	assume $\overline a$ and $\underline a$ are chosen so that they
	fulfill
	\[
		\sum_{i \in I_{\overline x}} 
		\delta^{-1}(v_i \cdot \underline a) \wwrel\le k - |I_{\overline x}|
	\qquad\text{and}\qquad
		\sum_{i \in I_{\overline x}} \delta^{-1}(v_i\cdot\overline a)
		\wwrel\ge k.
	\]
	Then, $\underline a \le a^* \le \overline a$.
\end{lemma}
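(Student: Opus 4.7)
The plan is to convert both hypothesized sum inequalities into bounds on the rank function $r(\cdot,\mset A)$ and then invoke the characterization $a^* = \mset A_{(k)}$, which amounts to the threshold property $r(a) \ge k \Leftrightarrow a \ge a^*$. The key tool throughout is \wref{lem:rank-as-sum}, which expresses $r$ as a sum of shifted floors of $\delta^{-1}$.

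For the upper bound $a^* \le \overline a$, I would invoke \wref{lem:rank-as-sum} to bound $r(\overline a) \ge \sum_{i \in I_{\overline x}}(\lfloor \delta^{-1}(v_i \overline a)\rfloor + 1)$, noting that terms for $i \notin I_{\overline x}$ are nonnegative (and vanish whenever $\overline a < \overline x$). Since $\lfloor y \rfloor + 1 > y$ for every real $y$, this sum strictly exceeds $\sum_{i \in I_{\overline x}} \delta^{-1}(v_i \overline a)$, which is at least $k$ by the second hypothesis. Integer-valuedness of $r$ then yields $r(\overline a) \ge k$, whence $a^* \le \overline a$.

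For the lower bound $\underline a \le a^*$, a direct application of the same lemma combined with $\lfloor y \rfloor \le y$ only delivers $r(\underline a) \le k$, which is not quite sufficient: in a tight boundary configuration where every $\delta^{-1}(v_i \underline a)$ happens to be an integer and the hypothesis attains equality, one could have $r(\underline a) = k$ but still $\underline a > a^*$. I would circumvent this by perturbing. Assuming $\underline a > 0$ (otherwise the claim is immediate since $a^* > 0$), pick any small $\epsilon > 0$; strict monotonicity of $\delta^{-1}$ makes the corresponding sum at $\underline a - \epsilon$ strictly smaller than $k - |I_{\overline x}|$, so the same floor-and-sum argument yields $r(\underline a - \epsilon) < k$ and hence $\underline a - \epsilon < a^*$. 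Letting $\epsilon \downarrow 0$ concludes $\underline a \le a^*$.

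The main obstacle is this asymmetry between the continuous hypotheses on $\delta^{-1}$ and the integer-valued, step-function nature of $r$: the strict inequality $\lfloor y \rfloor + 1 > y$ closes the $\overline a$ direction immediately, but $\lfloor y \rfloor \le y$ can be attained with equality, necessitating the perturbation trick on the $\underline a$ side. A minor technical point is ensuring the argument of \wref{lem:rank-as-sum} lies below $\overline x$; this is immediate in the non-trivial regime since $\underline a \ge \overline x > a^*$ would already contradict the goal, so we may freely assume $\underline a < \overline x$ and choose $\epsilon$ small enough.
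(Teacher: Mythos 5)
Your proof is correct and takes essentially the same route as the paper's: both directions reduce to the rank formula of \wref{lem:rank-as-sum} together with the elementary bounds $y-1 < \lfloor y \rfloor \le y$, showing that $\overline a$ is feasible while every $a < \underline a$ is infeasible, and your perturbation to $\underline a - \epsilon$ followed by $\epsilon \downarrow 0$ is just the paper's ``consider an arbitrary $a < \underline a$'' phrased as a limit. One remark: your dismissal of the case $\underline a \ge \overline x$ (``it would contradict the goal'') is circular as written, but the paper's own proof tacitly makes the same assumption when it applies its rank bounds only at arguments below $\overline x$, so this does not genuinely separate the two arguments.
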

The lemma follows more or less directly; one uses the sandwich bounds on $r$
to show that $a < \underline a$ are infeasible, i.\,e., $r(a) < k$, and
that $\overline a$ is feasible, and thus all $a > \overline a$ are suboptimal
since $a^*$ is the smallest feasible element in $\mset{A}$.

  \subsection[Proof of Lemma 5]{Proof of \wref{lem:astar-corridor}}%

  \label{app:proof-astar-corridor}%

  As a direct consequence of \wref{lem:rank-as-sum} together with 
  the fundamental bounds $y-1 < \lfloor y \rfloor \leq y$ on floors,
  we find that
  \begin{equation}
  \label{eq:corridor-bounds}
		  \sum_{\mathclap{i \in I_{\overline x}}} 
			  \delta^{-1}(v_i \cdot x)
	  \wwrel<
		  r(x, \mset A)
	  \wwrel\le
		  \sum_{\mathclap{i \in I_{\overline x}}}
			  \bigl( \delta^{-1}(v_i \cdot x) + 1 \bigr)
	  \wwrel=
		  |I_{\overline x}|
		  \bin+ \sum_{\mathclap{i \in I_{\overline x}}}
			  \delta^{-1}(v_i \cdot x)
  \end{equation}
  for any $\overline x$ and all $x < \overline x$.
  We can therewith pin down the value of $r$ to an interval of 
  width $|I_{\overline x}|$ using only $\delta^{-1}$.
  We can use this to derive upper \emph{and} lower bounds on $a^*$.

  We show that smaller $a$ are infeasible and larger $a$ are clearly suboptimal,
  so the optimal $a^*$ must lie in between.
  Let us first consider $a<\underline a$. 
  There are two cases: if there is a $v_i$, such that $v_i a \ge d_0$, 
  we get by strict monotonicity of $\delta^{-1}$
  \begin{align*}
	    r(a)
    &\wwrel{\relwithref{eq:corridor-bounds}{\le}}
	    |I_{\overline x}|
	    \bin+ \sum_{\mathclap{i \in I_{\overline x}}}
		    \delta^{-1}(v_i \cdot a)
  \\	&\wwrel<
	    |I_{\overline x}|
	    \bin+ \sum_{\mathclap{i \in I_{\overline x}}}
		    \delta^{-1}(v_i \cdot \underline a)
  \\	&\wwrel\le
	    k
  \end{align*}
  and $a$ is infeasible.
  If otherwise $v_i a < d_0$, i.\,e., $a < d_0/v_i$, for all $i$,
  $a$ must clearly have rank $r(a) = 0$ as it is smaller than any element
  $a_{i,j}\in\mset A$.
  In both cases we found that $a<\underline a$ has rank $r(a) < k$.

  Now consider the upper bound, i.\,e., we have $a > \overline a$. 
  In case $\overline a \ge \overline x$, 
  we have $a > \overline x > a^*$ by assumption and any such $a$ 
  cannot be optimal.
  Otherwise, for $\overline a < \overline x$,
  we have
  \begin{align*}
	    r(\overline a)
    &\wwrel{\relwithref{eq:corridor-bounds}{>}}
	    \sum_{\mathclap{i \in I_{\overline x}}} 
			    \delta^{-1}(v_i \cdot \overline a)
    \wwrel\ge
	    k,
  \end{align*}
  so $\overline a$ is feasible. 
  Any element $a > \overline a$ can thus not be the optimal solution~$a^*$,
  which is the \emph{minimal} $a$ with $r(a) \ge k$.

  \subsection[Proof of Lemma 2]{Proof of \wref{lem:candidate-set-linear-corridor}}%

  \label{app:proof-candidate-set-linear-corridor}%
  
  We consider the linear divisor sequence continuations 
    \[ \underline\delta(j) = \alpha j + \underline\beta 
         \qquad\text{and}\qquad
       \overline\delta(j) = \alpha j + \overline\beta \]
  for all $j\in\R_{\ge0}$ and start by noting that the inverses are
    \[ \underline\delta{}^{-1}(x) = \nicefrac x\alpha - \nicefrac{\underline\beta}\alpha
         \qquad\text{and}\qquad
       \overline\delta{}^{-1}(x) = \nicefrac x\alpha - \nicefrac{\overline\beta}\alpha \]
  for 
    $x\ge \underline\delta(0) = \underline\beta$ 
  and 
    $x\ge \overline\delta(0) = \overline\beta$,
  respectively.
  For smaller $x$, we are free to choose the value of the continuation from $[-1,0)$ 
  (cf.\ \ref{item:delta-inverse-below-d0});
  noting that $\nicefrac{x}{\alpha} - \nicefrac{\overline\beta}{\alpha} < 0$ for 
  $x < \overline\beta$, a choice that will turn out convenient is
  \begin{equation}
  \label{eq:delta-lin-inverse-as-max}
		  \underline\delta{}^{-1}(x) 
	  \wrel\ce 
		  \max \biggl\{\frac x\alpha - \frac{\underline\beta}\alpha,\; -1 \biggr\}
	  \quad\text{resp.}\quad
		  \overline\delta{}^{-1}(x) 
	  \wrel\ce 
		  \max \biggl\{\frac x\alpha - \frac{\overline\beta}\alpha,\; -1 \biggr\}.
  \end{equation}
  We state the following simple property for reference; 
  it follows from $\underline \delta(j) \le \delta(j) \le \overline\delta(j)$ 
  and the definition of the inverses (recall that $\underline\beta \leq \alpha$):
  \begin{equation}
  \label{eq:corridor-delta-lin-corridor-inverse}
		  \frac {x}\alpha - \frac{\overline\beta}\alpha
	  \wwrel\le
		  \overline\delta{}^{-1}(x)
	  \wwrel\le
		  \delta^{-1}(x)
	  \wwrel\le
		  \underline\delta{}^{-1}(x)
	  \wwrel\le
		  \frac {x}\alpha - \frac{\underline\beta}\alpha,
	  \qquad\text{for } x \ge 0.
  \end{equation}
  Equipped with these preliminaries, we compute
  \begin{align*}
		  &  \overline a
		  \wwrel= 
		  \frac{\alpha k + \overline\beta|I_{\overline x}|}{V_{\overline x}}.
  \\\wwrel\iff
		  &  \frac{\overline a}\alpha \cdot
			  \sum_{i \in I_{\overline x}} v_i
		  \wwrel= 
		  k \bin+ \frac{\overline\beta}{\alpha} \cdot |I_{\overline x}| ,
  \\\wwrel\iff
		  &  k
		  \wwrel=
			  \sum_{i \in I_{\overline x}} \biggl(
				  \frac{v_i \cdot \overline a}{\alpha} - \frac{\overline\beta}{\alpha}
			  \biggr)
		  \wwrel{\relwithref{eq:corridor-delta-lin-corridor-inverse}\le}
			  \sum_{i \in I_{\overline x}} \delta^{-1}(v_i\cdot\overline a),
  \end{align*}
  so $\overline a$ satisfies the condition of \wref{lem:astar-corridor}.
  Similarly, we find 
  \begin{align*}
		  &  \underline a
		  \wwrel= 
			  \frac{\alpha k - (\alpha - \underline\beta) \cdot |I_{\overline x}|}%
			       {V_{\overline x}},
  \\\wwrel\iff
		  &  \frac{\underline a }{\alpha} \cdot V_{\overline x}
		  \wwrel= 
		  k - (1 - \nicefrac{\underline\beta}{\alpha}) \cdot |I_{\overline x}|,
  \\\wwrel\iff
		  &	k
		  \wwrel=
			  |I_{\overline x}| + \sum_{i \in I_{\overline x}} 
							  \biggl(   \frac{v_i \cdot \underline a}{\alpha}
							          - \frac{\underline\beta}{\alpha} \biggr)
		  \wwrel{\relwithref{eq:corridor-delta-lin-corridor-inverse}\ge}
			   |I_{\overline x}| + \sum_{i \in I_{\overline x}} 
			  \delta^{-1}(v_i \cdot \underline a),
  \end{align*}
  that is $\underline a$ also fulfills the conditions of \wref{lem:astar-corridor}.

  \medskip\noindent
  For the bound on the number of elements falling between $\underline a$ and $\overline a$,
  we compute
  \begin{align*}
		  \bigl|\mset A \cap [\underline a,\overline a] \bigr|
	  &\wwrel=
		  \sum_{i\in I_{\overline x}} \bigl|A_i \cap [\underline a,\overline a] \bigr|
  \\	&\wwrel=
		  \sum_{i\in I_{\overline x}} \Biggl|\biggl\{ 
			  j\in\N_0 \biggm| \underline a \le \frac{d_j}{v_i} \le \overline a 
		  \biggr\} \Biggr|
  \\	&\wwrel=
		  \sum_{i\in I_{\overline x}} \Bigl|\bigl\{ 
			  j\in\N_0 \bigm| 
				  v_i\cdot \underline a
				  \le d_j 
				  \le v_i \cdot \overline a
		  \bigr\} \Bigr|
  \\	&\wwrel=
		  \sum_{i\in I_{\overline x}} \Bigl|\bigl\{ 
			  j\in\N_0 \bigm| 
				  \delta^{-1}(v_i\cdot \underline a)
				  \le j 
				  \le \delta^{-1}(v_i \cdot \overline a)
		  \bigr\} \Bigr|
  \\	&\wwrel\le
		  \sum_{i\in I_{\overline x}} \Bigl(
			  \delta^{-1}( v_i \cdot \overline a )
			  -\delta^{-1}( v_i \cdot \underline a )
			  +1
		  \Bigr)
  \\	&\wwrel{\relwithref{eq:corridor-delta-lin-corridor-inverse}\le}
		  \sum_{i\in I_{\overline x}} \Bigl(
			    \underline\delta^{-1}( v_i \cdot \overline a )
			  - \overline\delta^{-1}( v_i \cdot \underline a )
			  +1
		  \Bigr)
  \\	&\wwrel{\relwithref{eq:corridor-delta-lin-corridor-inverse}\le}
		  \sum_{i\in I_{\overline x}} \Bigl(
			    \frac{v_i \cdot \overline a - \underline\beta}{\alpha} 
			  - \frac{v_i \cdot \underline a - \overline\beta}{\alpha}
			  +1 
		  \Bigr)
  \\	&\wwrel=
		  \sum_{i\in I_{\overline x}} \Bigl(
			        1 + \frac{\overline\beta - \underline\beta}{\alpha}
			  \bin+ \frac{v_i\cdot \overline a - v_i \cdot \underline a}\alpha
		  \Bigr)
  \\	&\wwrel=
		  \biggl( 1 + \frac{\overline\beta - \underline\beta}{\alpha}\biggr) 
		    \cdot |I_{\overline x}| 
		  \wbin+ 
		    (\overline a - \underline a) \cdot \frac{{V_{\overline x}}}{\alpha}
  \\	&\wwrel=
		  \biggl( 1 + \frac{\overline\beta - \underline\beta}{\alpha}\biggr) 
		    \cdot |I_{\overline x}| 
		  \wbin+ \frac{(\alpha + \overline\beta - \underline\beta) 
		                 \cdot |I_{\overline x}|}%
		              {V_{\overline x}}
			  \cdot\frac{{V_{\overline x}}}{\alpha}
  \\	&\wwrel=
		  2\biggl(1 + \frac{\overline\beta - \underline\beta}{\alpha}\biggr) 
		    \cdot |I_{\overline x}|.
  \end{align*}

\section{Implementing the Algorithms}%
\label{app:algorithm-review}%

In this section, we review existing algorithms for divisor methods.
In particular, we elaborate on how we have implemented them for our
experiments \cite{OurCode}, and on problems we have encountered in this process.

We have taken care not to render the algorithm unnecessarily inefficient in order
to perform a fair comparison of running times; the result is to the best of our 
abilities conditioned on a limited time budget. In particular, all of our
implementations have been refined on the programming level to 
roughly the same degree.

For the purpose of a fair comparison, all implementation have to conform to
the same interface.
\begin{description}
  \item[Parameters:] A pair $(\alpha, \beta) \in \R^2$ with $\alpha > 0$
    and $\beta > 0$. 
  \item[Input:] Votes $\vect v$ and house size $k$.
  \item[Output:] A (symbolic) representation of all seat assignments valid
    w.\,r.\,t.\ divisor sequence $(\alpha j + \beta)_{j \geq 0}$, as well as
    proportionality constant $a^*$.
\end{description}
More specifically, the output is encoded as a vector of undisputed seats and a 
binary vector indicating which parties are tied for the remaining seats.
We skip the step from $a^*$ resp.\ a valid seat assignment to this representation
in the pseudo code since it is elementary: all parties with ``current'' resp.\ 
``next'' value $v_i / d_{s_i - 1}$ resp.\ $v_i / d_{s_i}$ equal $a^*$ are tied. 
A simple $\Th(n)$"/time post"=processing identifies these in all cases.

We have established confidence in the correctness of our implementations by
extensive random testing \cite[\texttt{TestMain.java}]{OurCode}; every implementation
has been run on thousands of instances. The correctness of the results has been
confirmed, besides rudimentary sanity checks such as matching vector dimensions, 
by checking \citeauthor{Pukelsheim2014}'s 
\emph{Max"=Min Inequality}~\cite[Theorem~4.5]{Pukelsheim2014}.

All implementations share the same numerical weakness, though: using
fixed"=precision arithmetics, two computations that should lead to the same
result (say, $a^*$) yield different numbers. We compensate for that by using
fuzzy comparisons: we identify numbers if they are within some constant $\epsilon$
of each other. Thus, we can reliably identify tied parties, for instance.

There is a drawback, though: if distinct values $v_i / d_j$ are closer than 
$\epsilon$ (or, even without the adaption, the resolution of the chosen
fixed"=precision number representation), we may identify them and thus compute 
wrong seat assignments.

This issue can not be circumvented on the algorithmic level. The only robust resort
we know of is using arbitrary"=precision arithmetics, inevitably slowing down all
the algorithms.


\subsection{Iterative Divisor Method}
\label{app:iterative-method}

Implementing \HAalg is straight"=forward. An implementation using a priority
queue implementation from the standard library runs in time $\Th(n + k \log n)$.
Since we expect overhead for the queue to be significant for small $n$,
we also implement a variant which determines $I$ using a simple linear scan,
resulting in a total running time in $\Th(kn)$.

Shared code aside, \HAalg takes about 50 resp.\ 65 lines of code with resp.\ 
without priority queues.


\subsection{Jump-and-Step}%
\label{app:jump-and-step-review}

The \emph{jump"=and"=step} algorithm~\cite[Section~4.6]{Pukelsheim2014} can be
formulated using our notation as follows:
\begin{algorithm}\label{alg:jump-and-step}
  $\textsc{\PukalgASCII}_d(\vect v, k)$ :
  \begin{enumerate}[label={\textsf{\textbf{Step~\arabic*}}},ref=\arabic*,leftmargin=4.5em]
    \item\label{step:puk-guess}
      Compute an estimator $a$ for $a^*$.
    \item\label{step:init}
      Initialize $s_i = \lfloor \delta^{-1}(v_i \cdot a) \rfloor + 1$.
    \item\label{step:iterate}
      Iterate similarly to \HAalg until $\sum s_i = k$ with
        \[ I = \begin{cases}
               \arg\max_{i=1}^n v_i/d_{s_i},     &\sum s_i < k; \\[1ex]
               \arg\min_{i=1}^n v_i/d_{s_i - 1}, &\sum s_i > k.
             \end{cases} \]
  \end{enumerate}
\end{algorithm}
The performance of this algorithm clearly depends on $\Delta_a \ce \sum s_i - k$
after \wref{step:init};
the running time is in $\Th(n + |\Delta_a| \cdot \log n)$ when using priority queues
for \wref{step:iterate} (which may \emph{not} be advisable in practice 
if $|\Delta_a|$ can be expected to be very small). 
As such, the running time is not per se bounded in $n$ and $k$.

We follow the recommendations of \citeauthor{Pukelsheim2014} and use the
estimator~\cite[Section~6.1]{Pukelsheim2014}
\[ a \wwrel\ce \frac{\alpha}{V} \cdot \begin{cases}
      k + n \cdot (\nicefrac{\beta}{\alpha} - \nicefrac{1}{2}), 
        &0 \leq \nicefrac{\beta}{\alpha} \leq 1 ;\\[1ex]
      k + n \cdot \lfloor \nicefrac{\beta}{\alpha} \rfloor, &\text{else}.
    \end{cases} \]
The first case corresponds to \citeauthor{Pukelsheim2014}'s \emph{recommended}
estimator for \emph{stationary} signpost sequences, the second to
his \emph{good universal} estimator generalized to divisor sequences that are 
not signpost sequences in the strict sense.
The additional factor $\alpha$ rescales the value appropriately; 
\citeauthor{Pukelsheim2014} only considers $\alpha = 1$.

Given that these estimators guarantee $|\Delta_a| \leq n$ in the worst case, we
can assume that \Pukalg runs in time $\Oh(n \log n)$. Furthermore,
\citeauthor{Pukelsheim2014} claims that the recommended estimator is good in
practice in the sense that $|\Delta_a| \in \Oh(1)$ in expectation, so 
\Pukalg may be efficient in practice for large $n$ as well.
Since their proof is limited to uniformly distributed votes and $k \to \infty$, 
we investigate this in \wref{sec:algorithms-comparison}.

Shared code aside, \Pukalg takes about 120 lines of code, with or without 
priority queues.

                          
\subsection[The Algorithm of Cheng and Eppstein]{The Algorithm of \citeauthor{Cheng2014}}
\label{app:CE-main-procedure}

\Textcite{Cheng2014} do not give pseudocode for the main procedure of their algorithm
which would combine the individual steps to compute $\mathcal{A}_{(k)}$.
For the reader's convenience and for clarity concerning our running"=time
comparisons we give this top"=level procedure as we have inferred it.

\begin{algorithm}\label{alg:CE-main-procedure}
  $\textsc{\CEalgASCII}_d(\vect v, k)$ :	
  \begin{enumerate}[label={\textsf{\textbf{Step~\arabic*}}},ref=\arabic*,leftmargin=4.5em]
    \item \label{step:cemain-init}
      Compute a suitable finite representation of $\mathcal{A}$.
    
    \item \label{step:cemain-findcontrseq}
      $\mathcal{C} \ce \textsc{FindContributingSequences}(\mathcal{A}, k)$.
      
    \item \label{step:cemain-findcoarse}
      $\xi \ce s^{-1}(k)$ \cite[(3)]{Cheng2014}.
      
    \item \label{step:cemain-findlowrankcoarse}
      If $r(\xi, \mathcal{A}) \geq k$ then\\
      \hspace*{1em}$\xi \ce \textsc{LowerRankCoarseSolution}(\mathcal{A}, k, \xi)$.
      
    \item \label{step:cemain-returnexact}
      Return $\textsc{CoarseToExact}(\mathcal{A}, k, \xi)$.
  \end{enumerate}
\end{algorithm}
The subroutines are given in sufficient detail in their Algorithms~1 to~3, respectively.
The pseudo code given uses some high"=level set operations which we did not 
implement naively due to performance concerns; we compute several steps
during a single iteration over the respective sets of sequences.

Note that we have (hopefully) fixed an off-by-one mistake in the text.
The definition of rank $r(x,A)$ is, 
  ``the number of elements of $A$ less than or equal to $x$'';
that is, the rank of $A(j)$ is $j+1$ since $A$ is zero"=based (the first element is $A(0)$).
However, the authors continue to say that $r(x,A)$ 
	``is the index $j$ such that $A(j) \le x < A(j+1)$.''
	
Regarding performance, \citeauthor{Cheng2014} show that their algorithm runs in
time $\Th(n)$ in the worst case.
Since \CEalg computes a linear number of medians and requires a linear number of
evaluations of rank function $r(x,\mset A)$ (with geometrically shrinking 
$|\mset A|$~-- otherwise the algorithm would not run in linear time),
it is unclear whether the algorithm is efficient in practice.

Shared code aside, \CEalg take about 300 lines of code. By this measure, it is 
the most complex of the algorithms we consider.
	
\subsection*{Additional Issues with Numerics}
In addition to the concerns expressed above, there are additional 
numerical issues when implementing \CEalg using fixed"=precision floating"=point arithmetics.
In short, we have to compute certain floors and ceilings of real numbers
\emph{exactly} or we may compute a \emph{wrong result}. 

More specifically, we evaluate $r(x,\mset A)$ several times by computing 
terms of the form $\lfloor \delta^{-1}(\_) \rfloor$ (cf.\ \wref{lem:rank-as-sum}).
The problem is that the result of $\delta^{-1}(\_)$ is non"=integral in general, 
but \emph{is} integral when the argument evaluates exactly to a $d_j$.
With the usual floating"=point arithmetic the result might be slightly smaller, though.
We then erroneously round down to the next smaller integer~-- a critical error!

In practice, we can add a small constant to the mantissa before taking the floor.
This constant has to be chosen large enough to cover potential rounding errors,
but also \emph{small enough} so as to not change subsequent calculations; 
\CEalg may compute a wrong answer otherwise. 
This is a very delicate requirement we do not know how to fulfill in general.

	
\subsection{\RWalgASCII}%
\label{app:rwalg-review}%
We already discuss our algorithm at length in \wref{sec:fast-selection-alg}.
Since we want to investigate \emph{practical} performance, we implement
rank"=selection using average"=case efficient Quickselect as opposed to
using a linear"=time algorithm with large constant factors.

We want to emphasize that our final algorithm \RWalg is conceptually simple in 
the sense that there is little hidden complexity.
We need exactly one call to a rank selection algorithm on a linear"=size 
list which takes five additional linear"=time operations to come up with: 
  finding the maximal value $v^{(1)}$,
  constructing index set $I_{\overline x}$, 
  computing $V_{\overline x}$,
  constructing multiset $\hat{\mset A}$
  and computing~$\hat k$.
These are all quite elementary tasks in that they use one \texttt{for}"/loop
each which run for at most $n$ iterations with only few operations in each.
We therefore think that we can outperform \CEalg in practice, and should not be
far behind \Pukalg, either.

Regarding implementation, the delicate part was to get the bounds on $j$ 
(cf.\ \wref{step:linear-selection-compute-jmin-jmax}) right.
We use floor and ceiling functions on real numbers, so rounding errors that occur 
in fixed"=precision floating"=point arithmetic can cause harm.
We can circumvent this by adding (subtracting) a conservatively large constant to the 
mantissa of the floats before taking floors (ceilings).
If this constant is larger than necessary for covering rounding errors, 
we might add slightly more candidates to $\hat{\mset{A}}$ (at most two per party)
which would slightly degrade performance. 
Correctness, however, is \emph{not} affected (in contrast to \CEalg).

We also remark here that the code~\cite{OurCode} for the experimental results discussed in 
\wref{sec:algorithms-comparison}
is based on an earlier version of \wref{lem:candidate-set-linear-corridor}
with slightly weaker bounds (cf.\ \wref{app:errata}).
Experiments with the updated code are to follow, and might yield slight improvements
for \RWalg.

Shared code aside, \RWalg takes about 100 lines of code. By this measure,
it is the least complex of the non"=trivial algorithms we consider.

\section{Experimental Setup}%
\label{app:experiments-machine-spec}%
We have run the experiments with Java~7 on Ubuntu~14.04~LTS 
running kernel 3.13.0-34-generic x86\_64 GNU/Linux.
The hardware platform is a ThinkPad~T430s Tablet with the following core 
parameters according to \texttt{lshw}.
\begin{description}
  \item[CPU:] Intel\textsuperscript{\textregistered} Core{\texttrademark} 
              i5-3320M CPU @ 2.60GHz
  \item[Cache:] L1 32KiB, L2 256KiB, L3 3MiB
  \item[RAM:] 4+4GiB SODIMM DDR3 Synchronous 1600 MHz (0.6 ns)
\end{description}

As our code is written in Java, we include a warm-up phase
to trigger just-in-time compilation of the relevant methods. 
All times are measured using the built-in method \texttt{System.nanoTime()}.
We use the same set of inputs for all algorithms, all of which have to 
construct the full set $\mathcal{S}(\vect v, k)$ for each input $(\vect v, k)$
during the measurement.

In order to increase accuracy, we repeat the execution of each algorithm on each input
several times and measure the total time; we then report the average time per
execution.

For the selection"=based algorithms, we use the randomized Quickselect"=based 
implementation by \textcite{SedgewickWayne2011} as published on the book website. 
We use the (pseudo) random number generators for several distributions 
from the same library (download of \texttt{stdlib-package.jar} on 
August~11th, 2015).

For reproducing our running time experiments, make sure you have
working GNU/Linux\footnote{%
  Our framework \emph{may} work on other platforms, maybe with small adjustments
  to the Ruby code, but we have not tried. See \texttt{README.md} for a workaround.
} installation with Ruby, Java~7 and Ant; then execute
\begin{indented}\ttfamily
	ruby run\_experiments.rb arxiv.experiment
\end{indented}
for the data represented in \wref{sec:algorithms-comparison} and \wref{app:moreplots}.
Be warned: this may run for a long time, and it \emph{will} create lots of
images (provided you have \texttt{gnuplot} installed).

\section{More Running-Time Experiments}
\label{app:moreplots}%
We apologize to only offer draft graphics without commentary for the time being.
\IfFileExists{plots/draft/draft_plots_refined}{%
  \clearpage\enlargethispage{\baselineskip}
Average normalized runtimes for several input distributions and across several orders of magnitudes of $n$.

\includegraphics[width=0.45\linewidth]{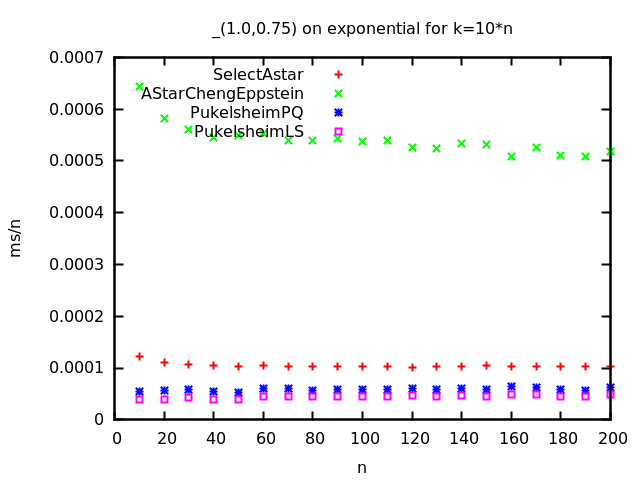}
\hfill\includegraphics[width=0.45\linewidth]{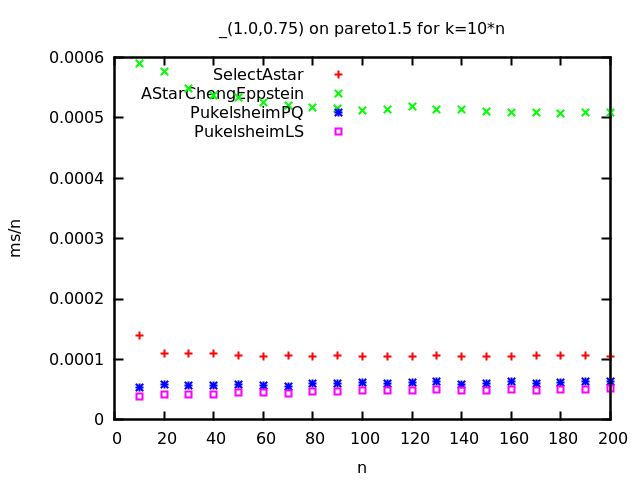}
\hfill\\
\hfill\includegraphics[width=0.45\linewidth]{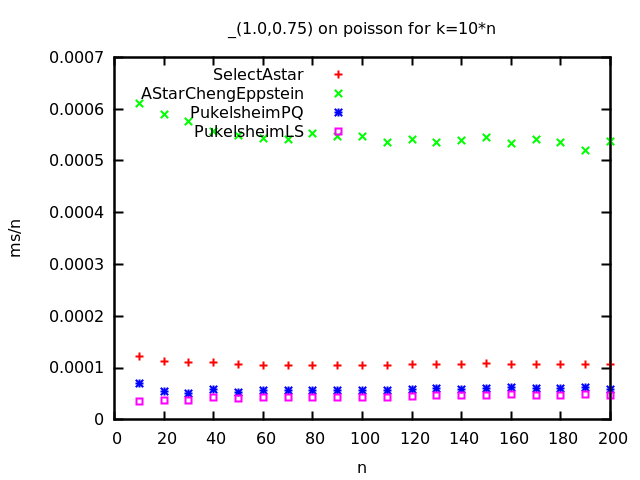}
\hfill\includegraphics[width=0.45\linewidth]{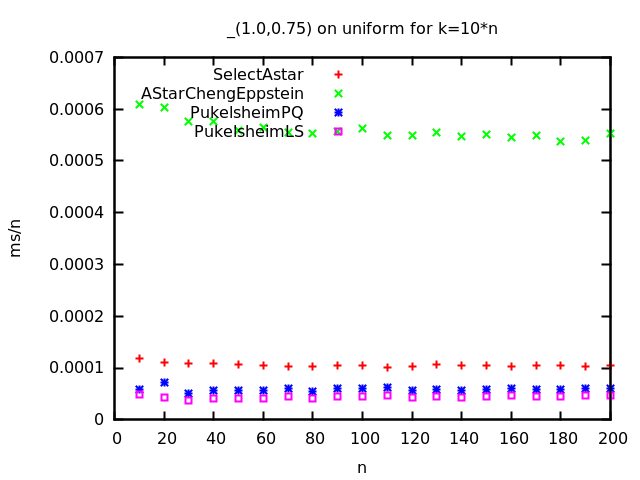}
\hfill\\
\hfill\includegraphics[width=0.45\linewidth]{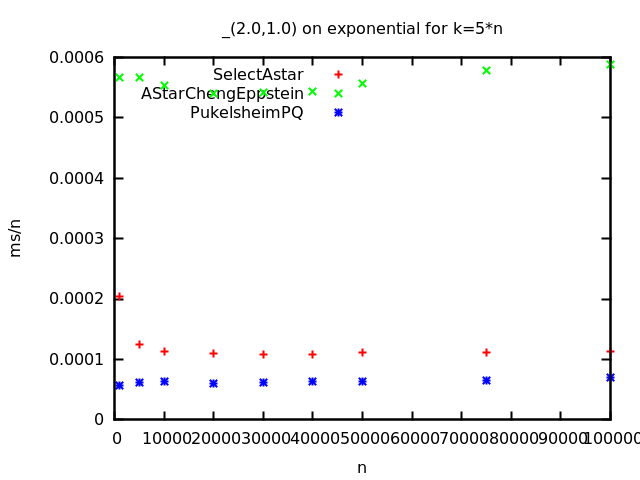}
\hfill\includegraphics[width=0.45\linewidth]{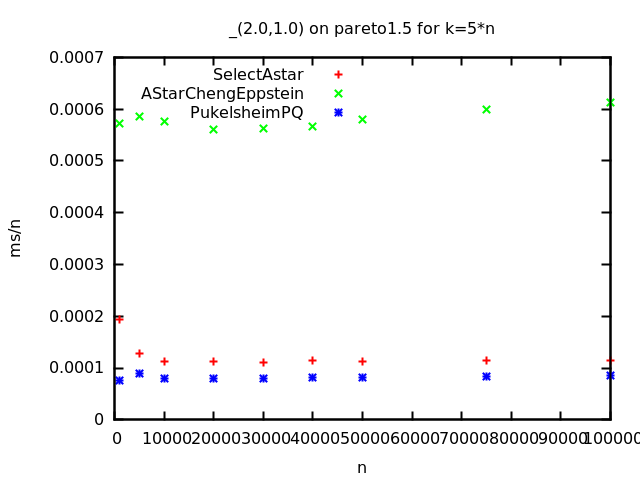}
\hfill\\
\hfill\includegraphics[width=0.45\linewidth]{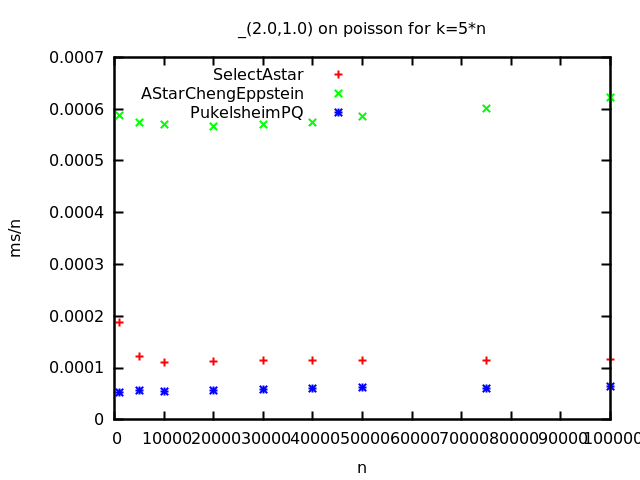}
\hfill\includegraphics[width=0.45\linewidth]{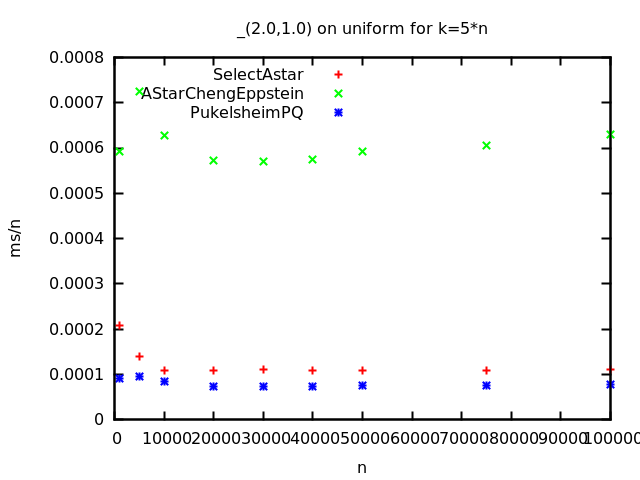}
\hfill

\clearpage\enlargethispage{\baselineskip}
Normalized runtimes of \CEalg for several input distributions and across several orders of magnitudes of $n$.

\includegraphics[width=0.45\linewidth]{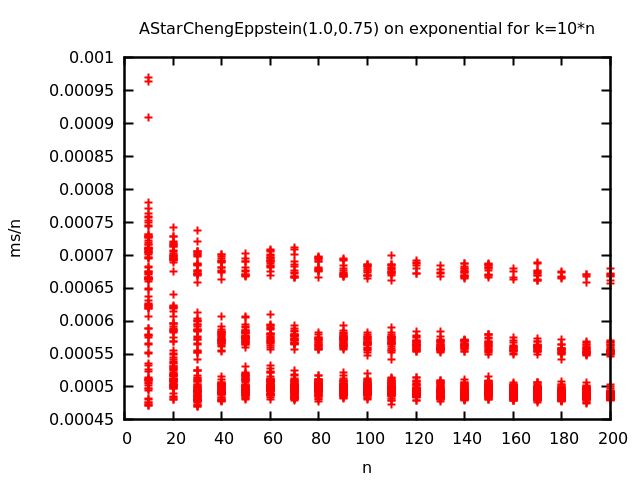}
\hfill\includegraphics[width=0.45\linewidth]{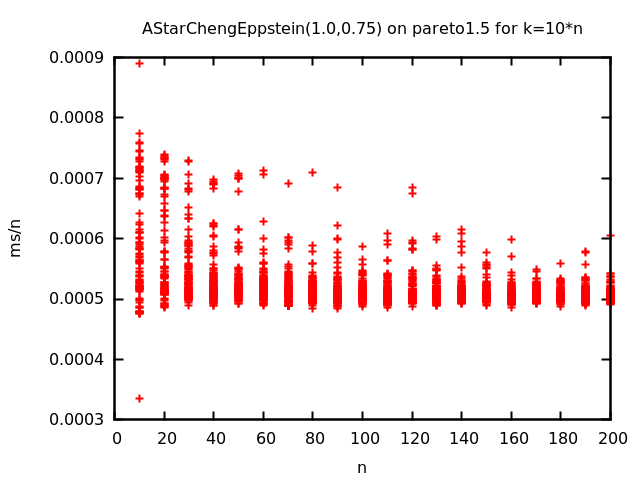}
\hfill\\
\hfill\includegraphics[width=0.45\linewidth]{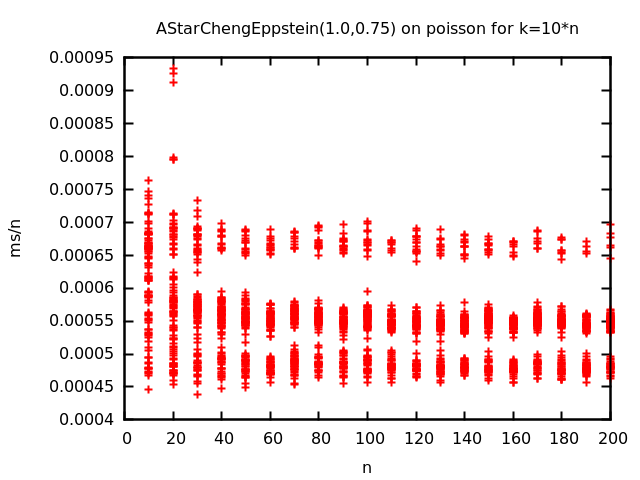}
\hfill\includegraphics[width=0.45\linewidth]{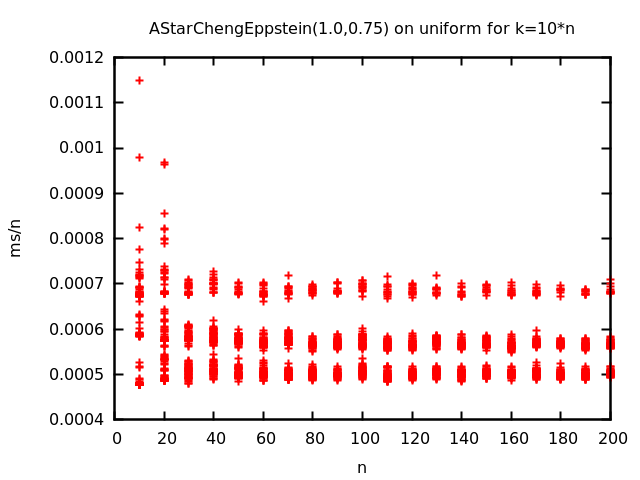}
\hfill\\
\hfill\includegraphics[width=0.45\linewidth]{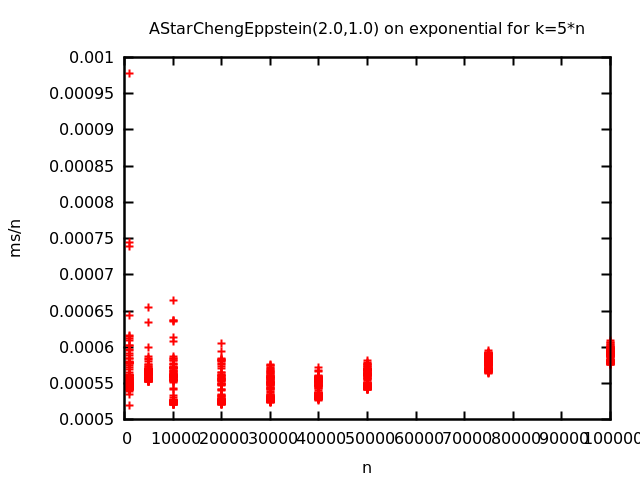}
\hfill\includegraphics[width=0.45\linewidth]{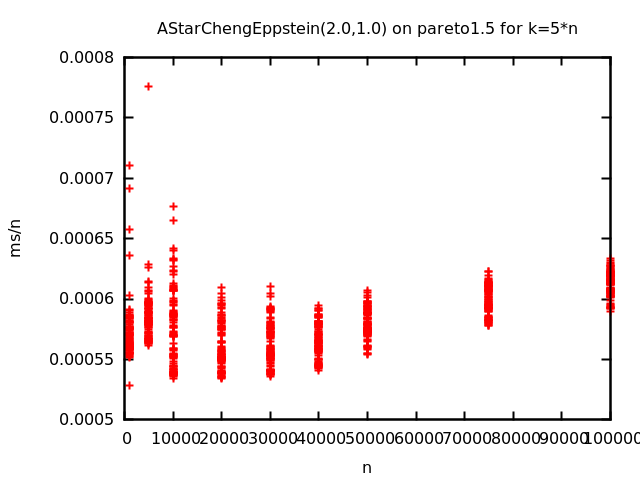}
\hfill\\
\hfill\includegraphics[width=0.45\linewidth]{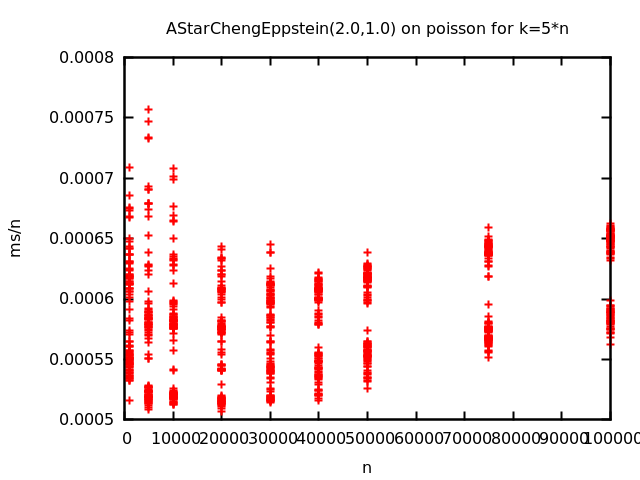}
\hfill\includegraphics[width=0.45\linewidth]{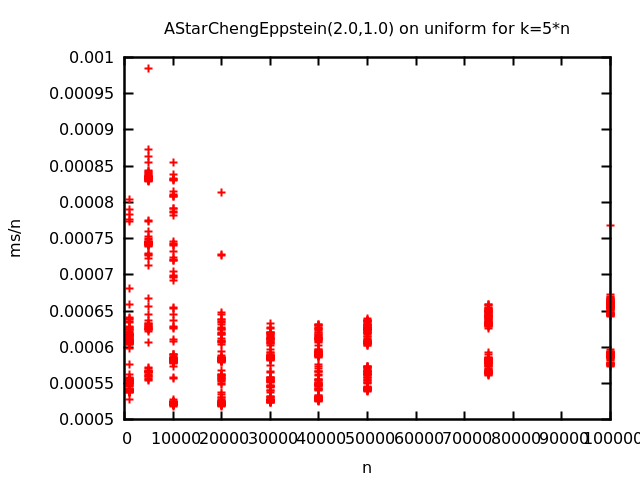}
\hfill

\clearpage\enlargethispage{\baselineskip}
Normalized runtimes of \Pukalg for several input distributions and across several orders of magnitudes of $n$.

\includegraphics[width=0.45\linewidth]{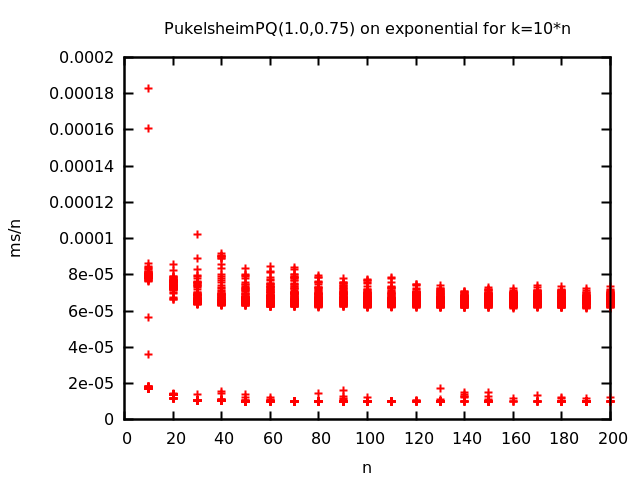}
\hfill\includegraphics[width=0.45\linewidth]{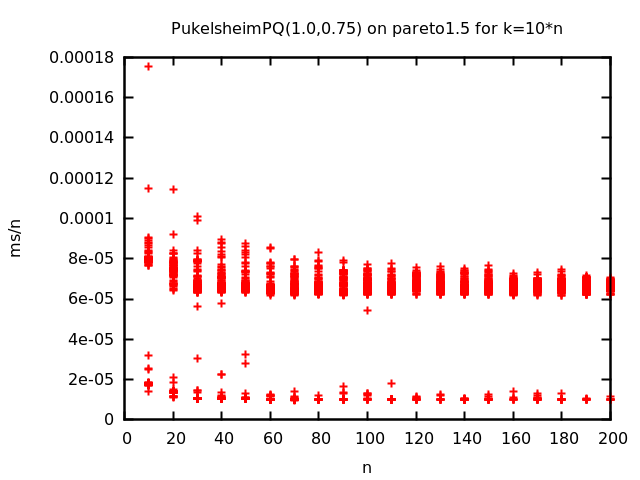}
\hfill\\
\hfill\includegraphics[width=0.45\linewidth]{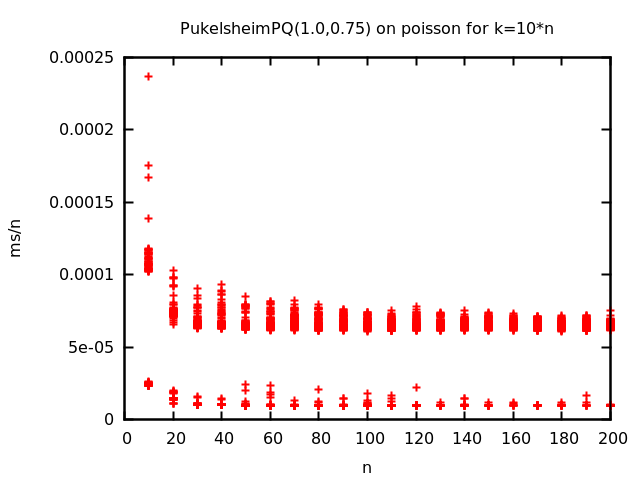}
\hfill\includegraphics[width=0.45\linewidth]{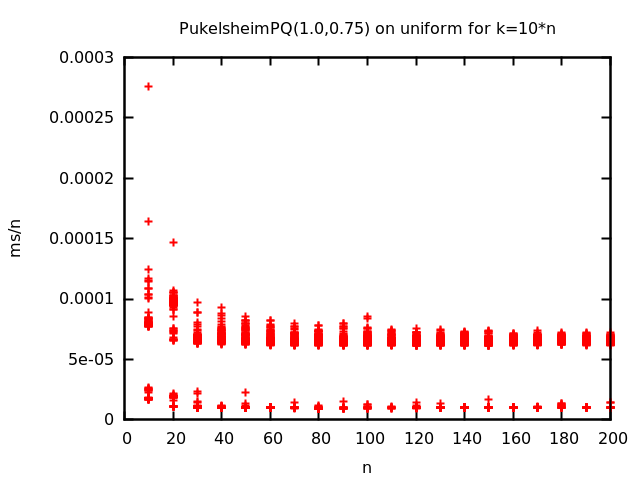}
\hfill\\
\hfill\includegraphics[width=0.45\linewidth]{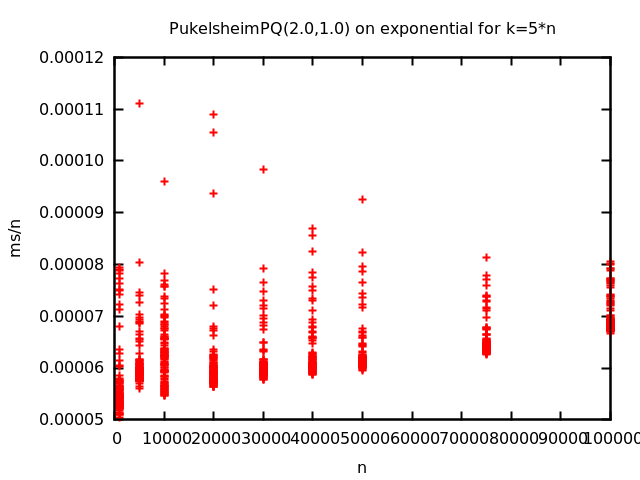}
\hfill\includegraphics[width=0.45\linewidth]{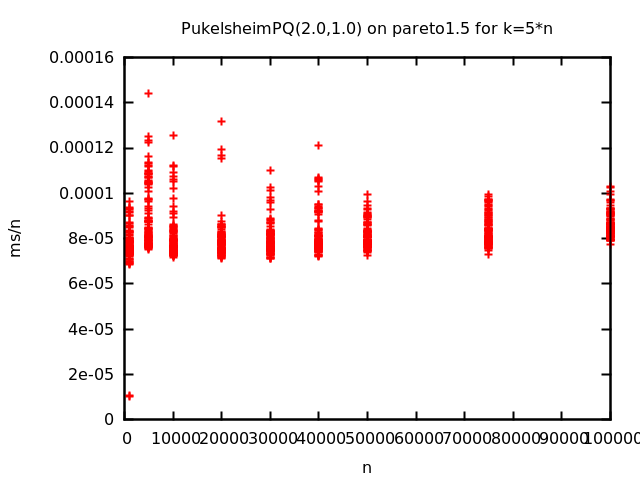}
\hfill\\
\hfill\includegraphics[width=0.45\linewidth]{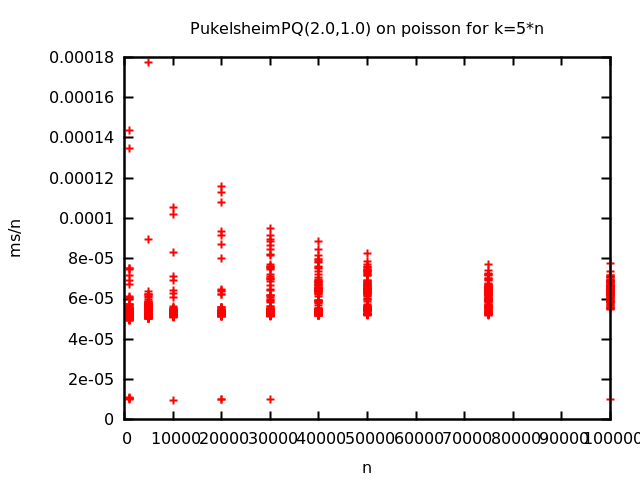}
\hfill\includegraphics[width=0.45\linewidth]{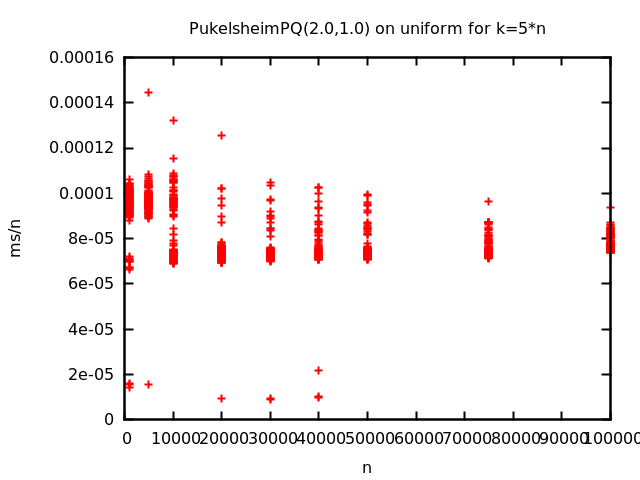}
\hfill

\clearpage\enlargethispage{\baselineskip}
Normalized runtimes of \RWalg for several input distributions and across several orders of magnitudes of $n$.

\includegraphics[width=0.45\linewidth]{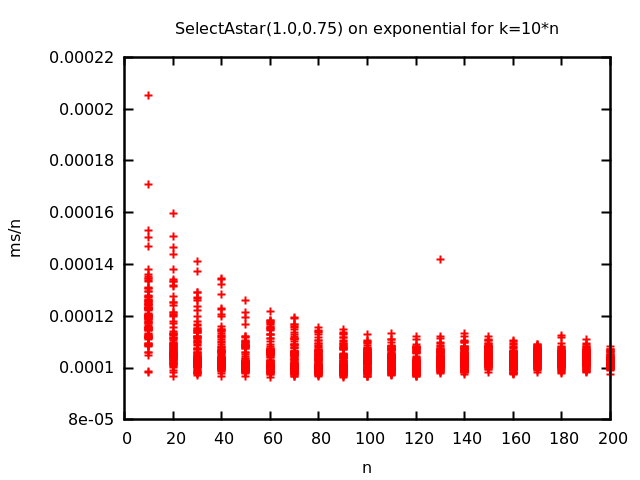}
\hfill\includegraphics[width=0.45\linewidth]{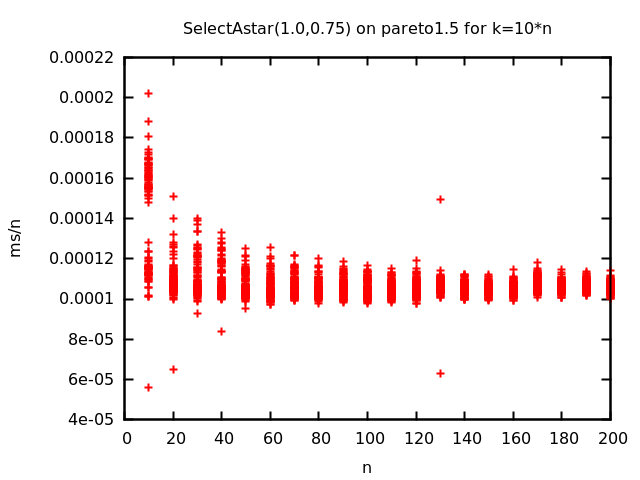}
\hfill\\
\hfill\includegraphics[width=0.45\linewidth]{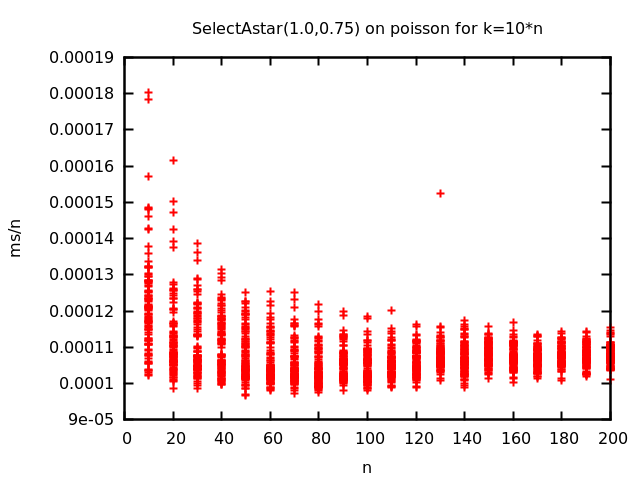}
\hfill\includegraphics[width=0.45\linewidth]{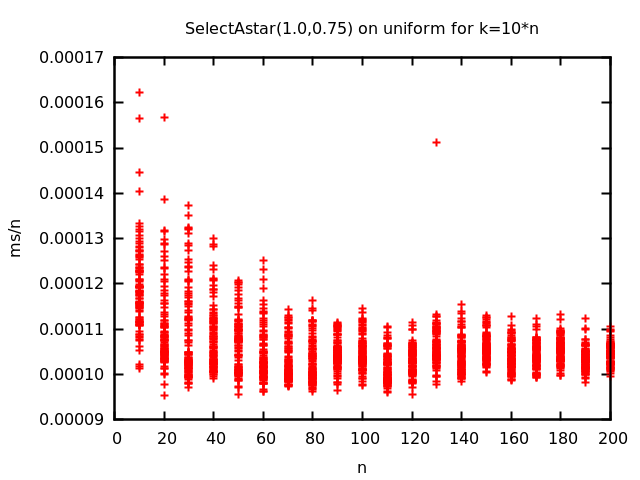}
\hfill\\
\hfill\includegraphics[width=0.45\linewidth]{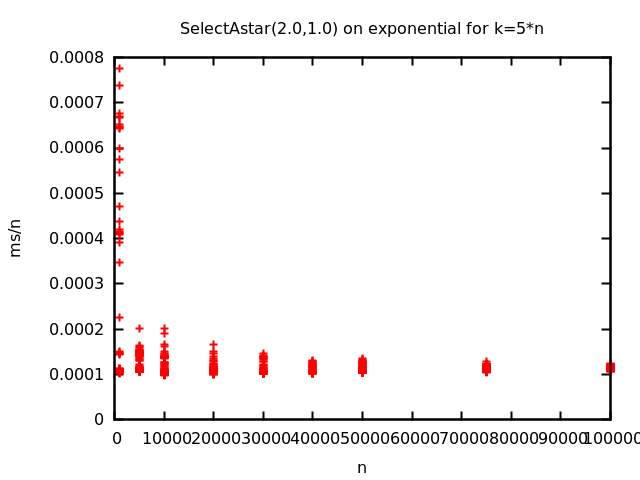}
\hfill\includegraphics[width=0.45\linewidth]{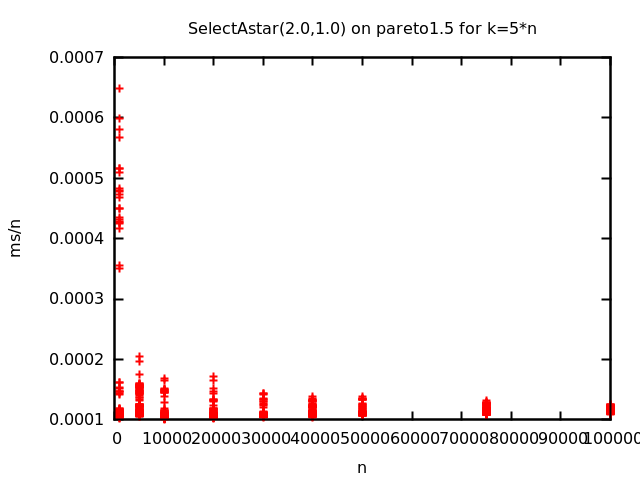}
\hfill\\
\hfill\includegraphics[width=0.45\linewidth]{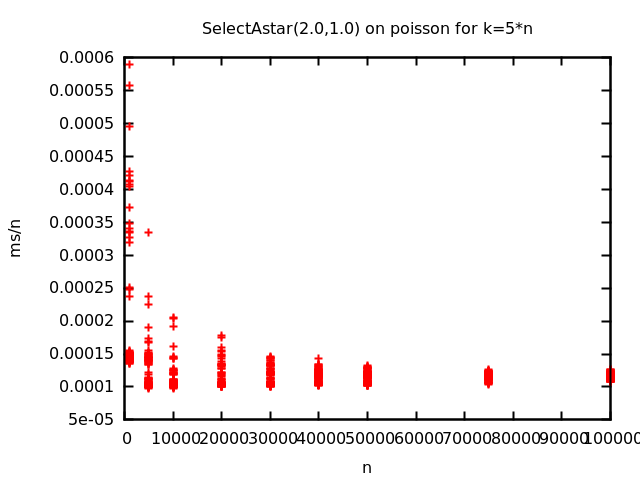}
\hfill\includegraphics[width=0.45\linewidth]{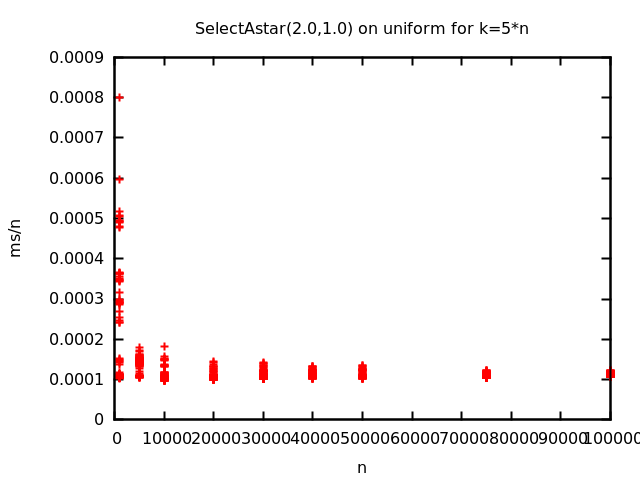}
\hfill

\clearpage\enlargethispage{\baselineskip}
Normalized $\Delta_a$ of \Pukalg for several input distributions and across several orders of magnitudes of $n$.

\includegraphics[width=0.45\linewidth]{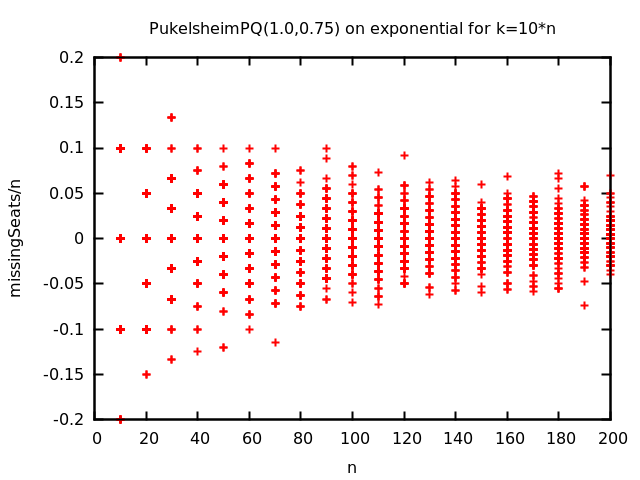}
\hfill\includegraphics[width=0.45\linewidth]{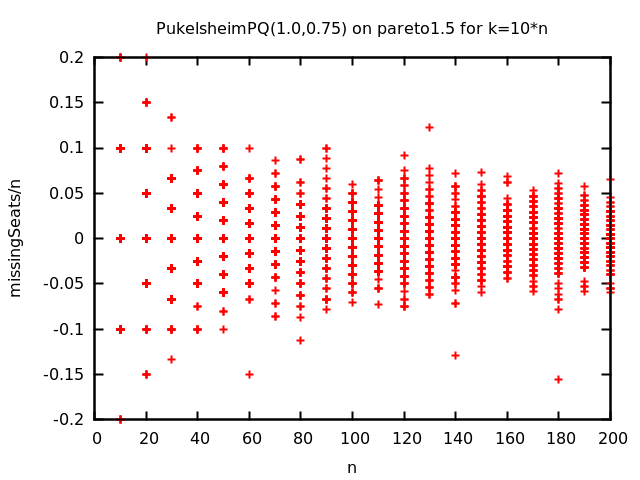}
\hfill\\
\hfill\includegraphics[width=0.45\linewidth]{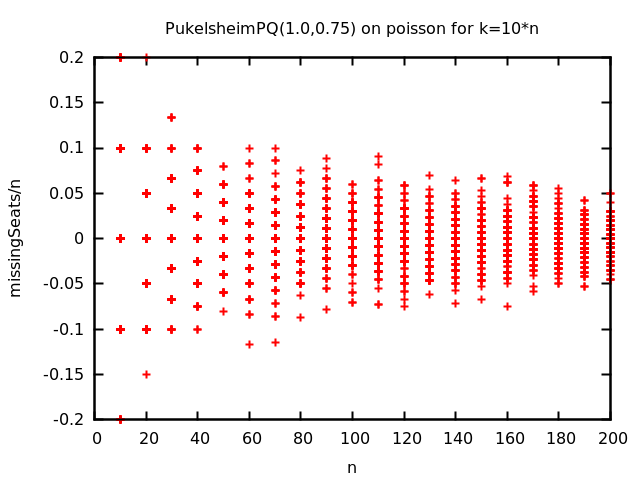}
\hfill\includegraphics[width=0.45\linewidth]{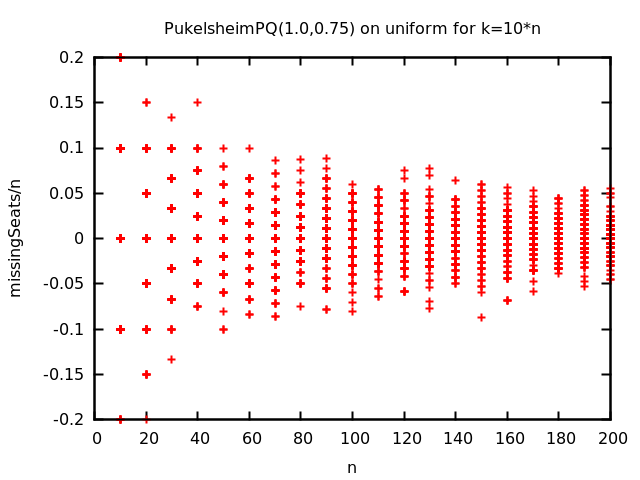}
\hfill\\
\hfill\includegraphics[width=0.45\linewidth]{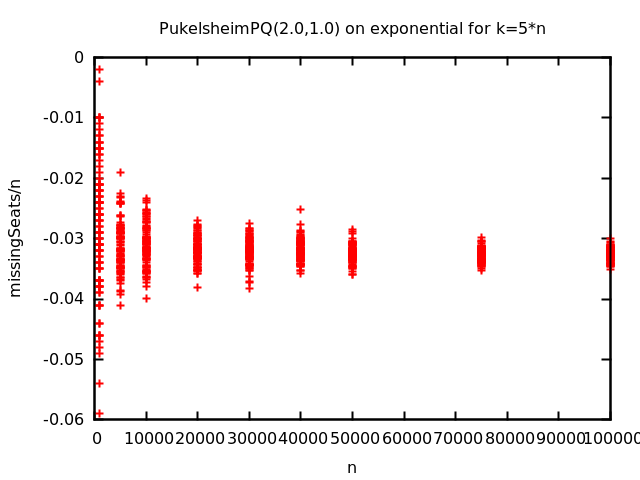}
\hfill\includegraphics[width=0.45\linewidth]{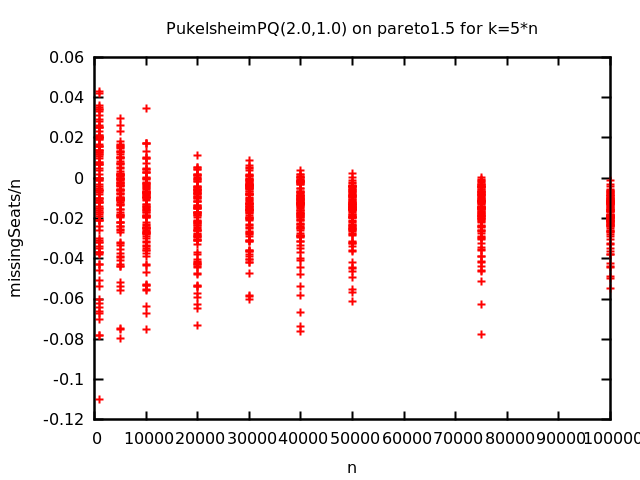}
\hfill\\
\hfill\includegraphics[width=0.45\linewidth]{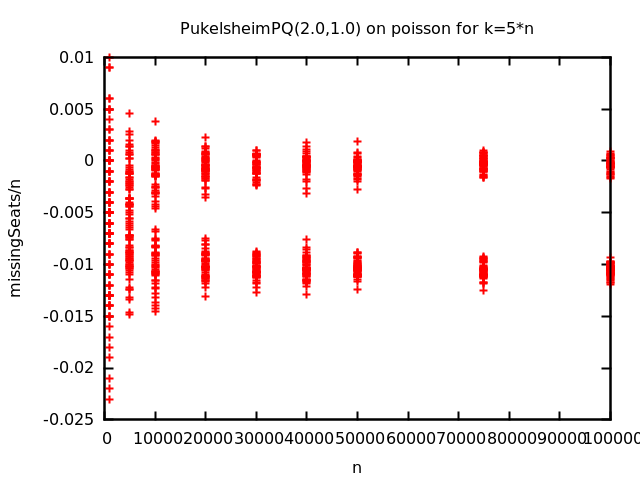}
\hfill\includegraphics[width=0.45\linewidth]{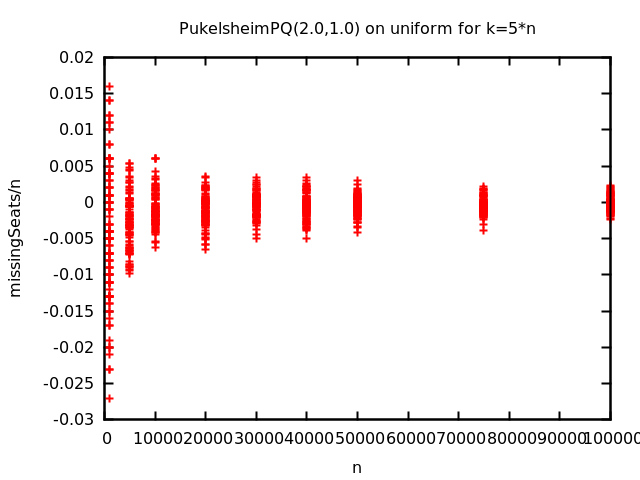}
\hfill

\clearpage\enlargethispage{\baselineskip}
Normalized $|\hat{\mset{A}}|$ of \RWalg for several input distributions and across several orders of magnitudes of $n$.

\includegraphics[width=0.45\linewidth]{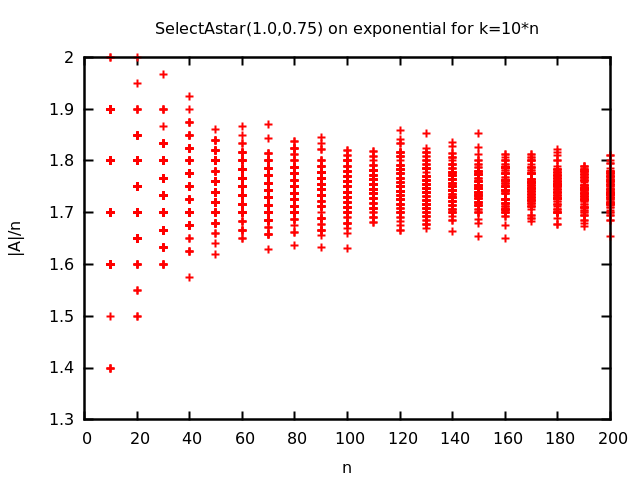}
\hfill\includegraphics[width=0.45\linewidth]{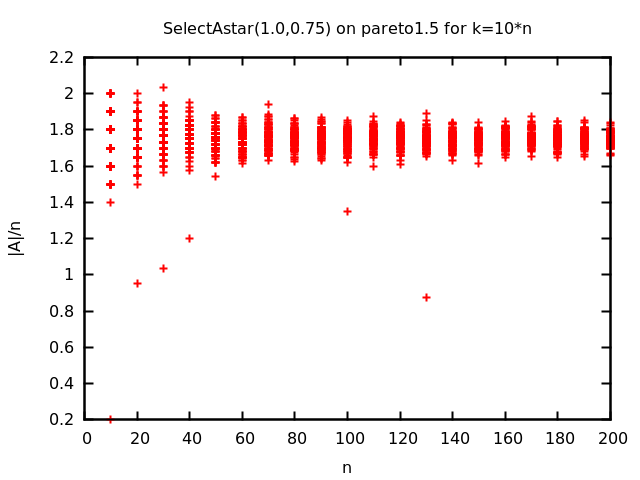}
\hfill\\
\hfill\includegraphics[width=0.45\linewidth]{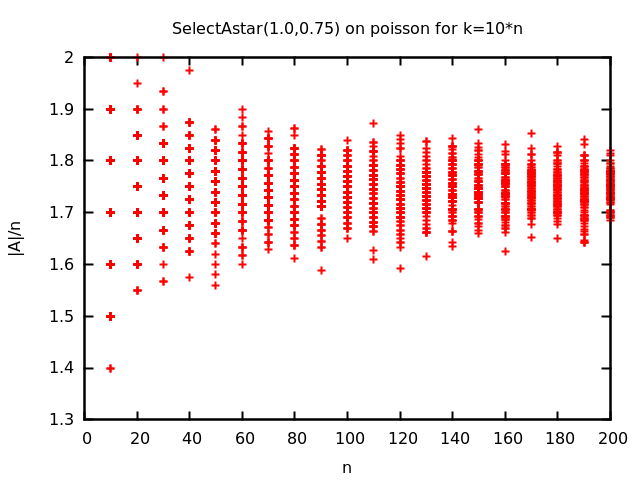}
\hfill\includegraphics[width=0.45\linewidth]{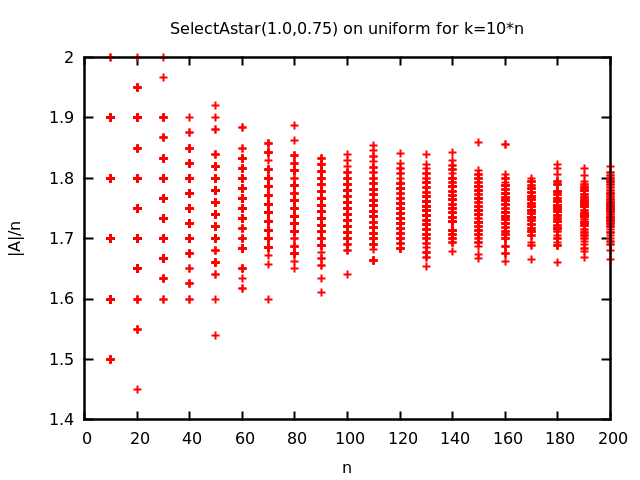}
\hfill\\
\hfill\includegraphics[width=0.45\linewidth]{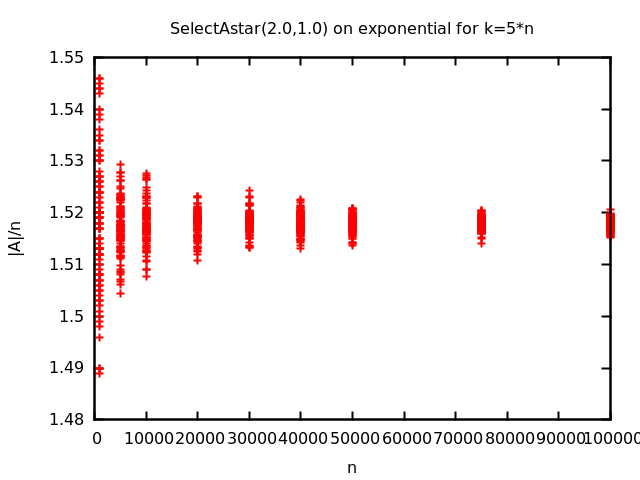}
\hfill\includegraphics[width=0.45\linewidth]{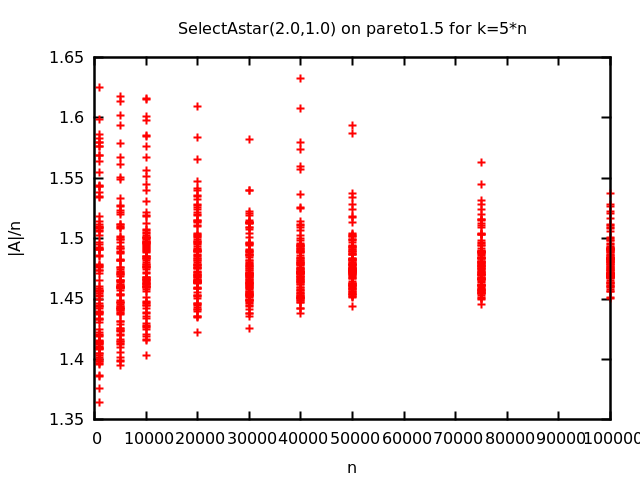}
\hfill\\
\hfill\includegraphics[width=0.45\linewidth]{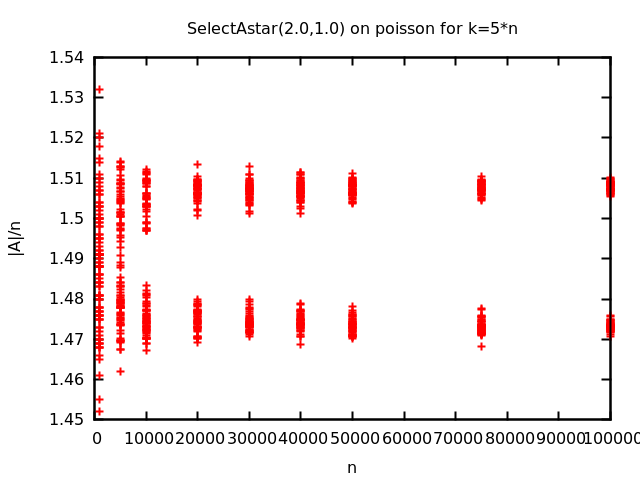}
\hfill\includegraphics[width=0.45\linewidth]{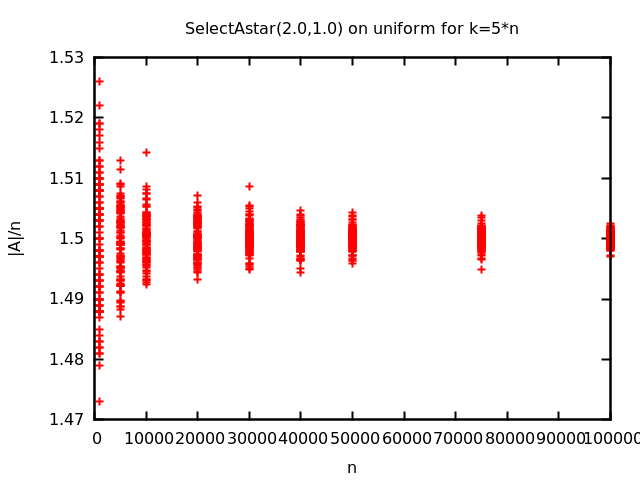}
\hfill

\clearpage\enlargethispage{\baselineskip}
Runtimes against $\Delta_a$ of \Pukalg for several input distributions and across several orders of magnitudes of $n$. Each color stands for one $n$.

\includegraphics[width=0.45\linewidth]{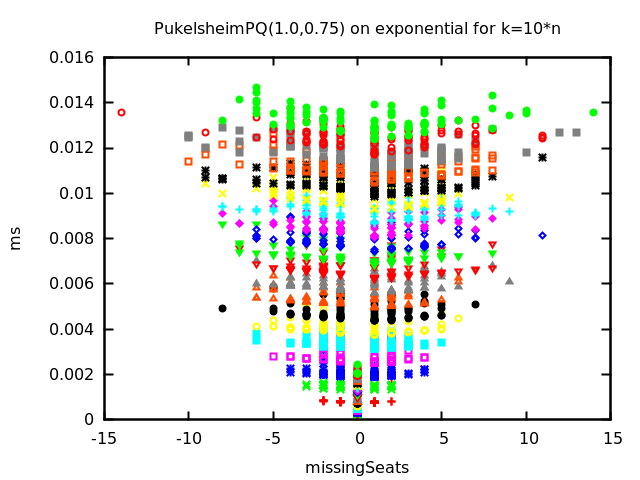}
\hfill\includegraphics[width=0.45\linewidth]{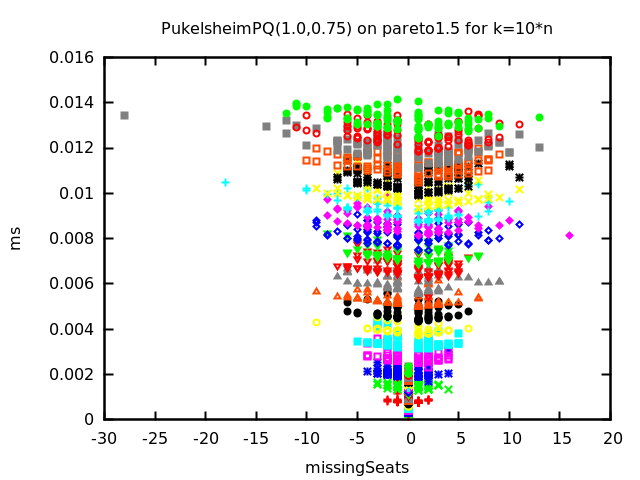}
\hfill\\
\hfill\includegraphics[width=0.45\linewidth]{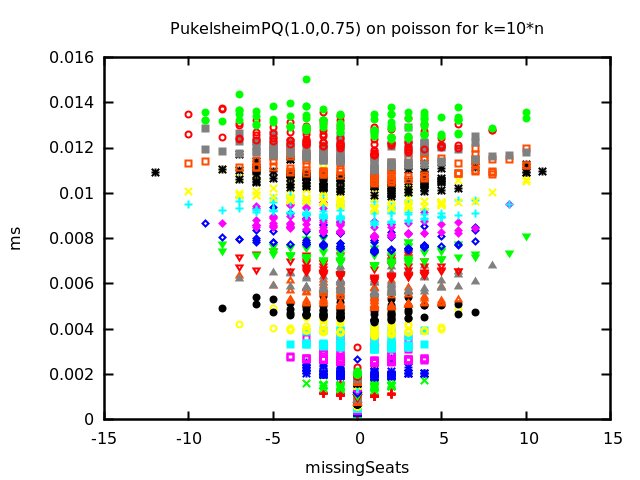}
\hfill\includegraphics[width=0.45\linewidth]{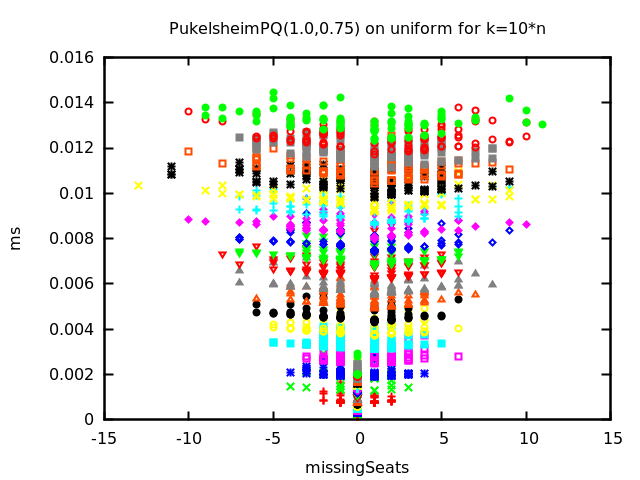}
\hfill\\
\includegraphics[width=0.45\linewidth]{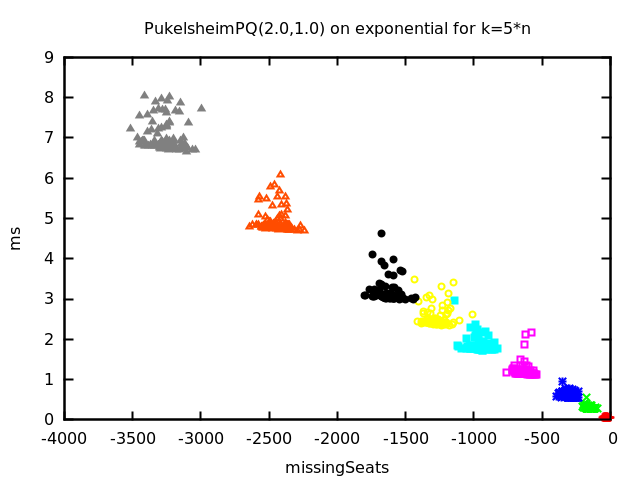}
\hfill\includegraphics[width=0.45\linewidth]{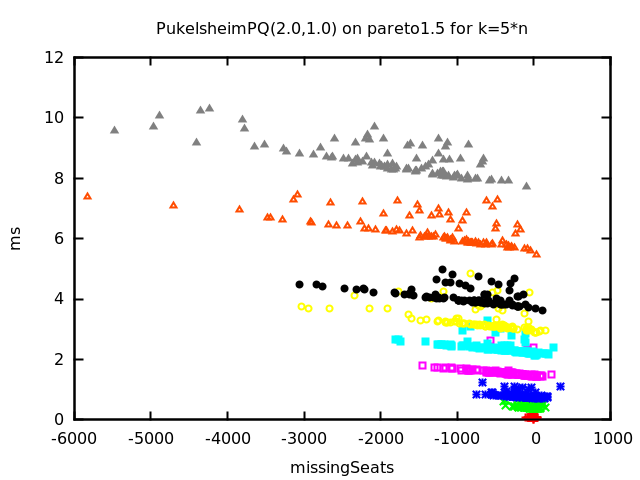}
\hfill\\
\hfill\includegraphics[width=0.45\linewidth]{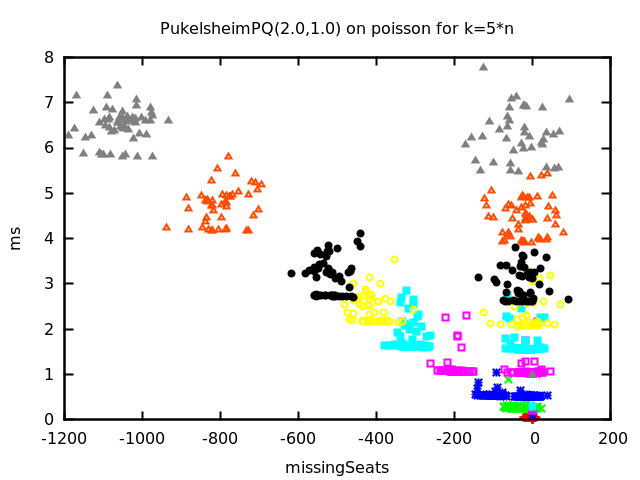}
\hfill\includegraphics[width=0.45\linewidth]{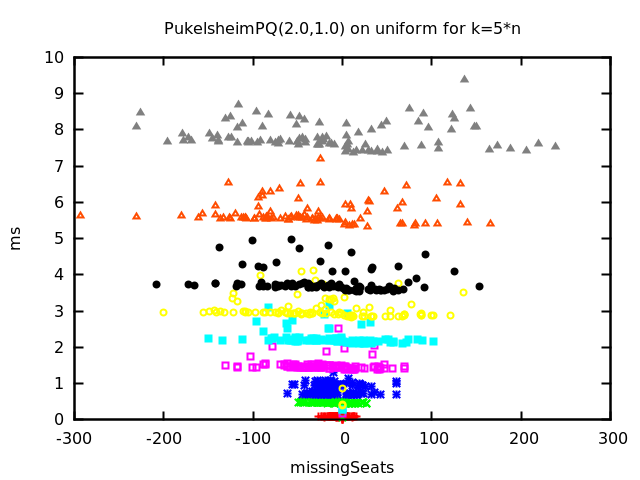}
\hfill

\clearpage\enlargethispage{\baselineskip}
Runtimes against $|\hat{\mset{A}}|$ of \RWalg for several input distributions and across several orders of magnitudes of $n$. Each color stands for one $n$.

\includegraphics[width=0.45\linewidth]{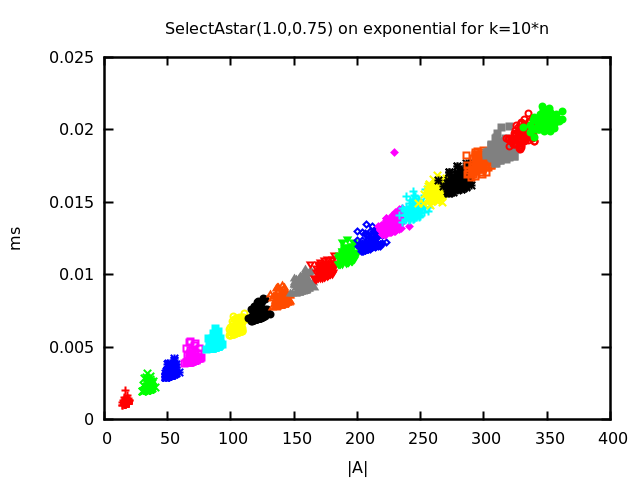}
\hfill\includegraphics[width=0.45\linewidth]{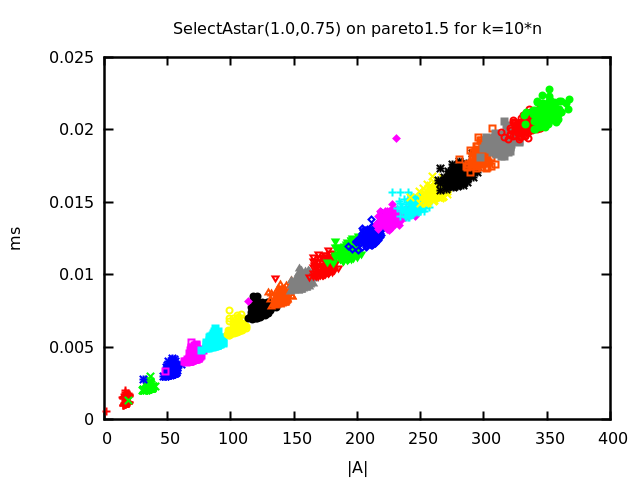}
\hfill\\
\hfill\includegraphics[width=0.45\linewidth]{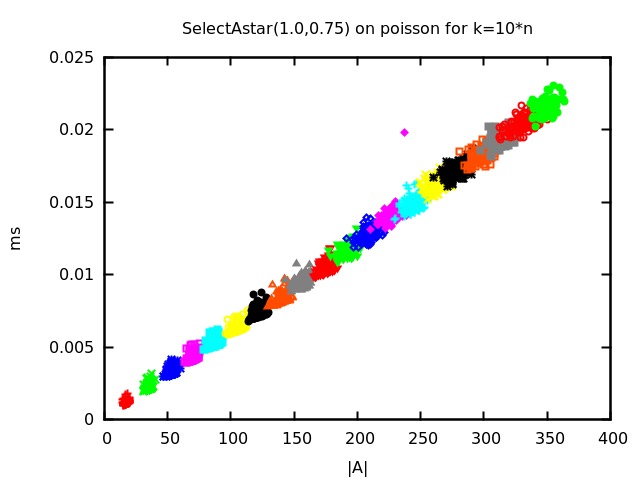}
\hfill\includegraphics[width=0.45\linewidth]{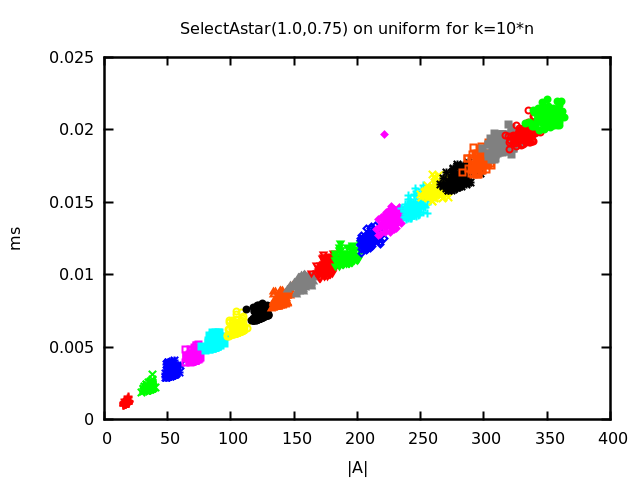}
\hfill\\
\hfill\includegraphics[width=0.45\linewidth]{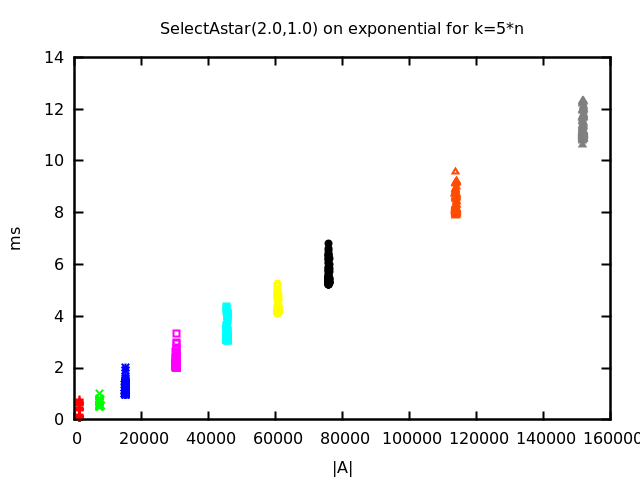}
\hfill\includegraphics[width=0.45\linewidth]{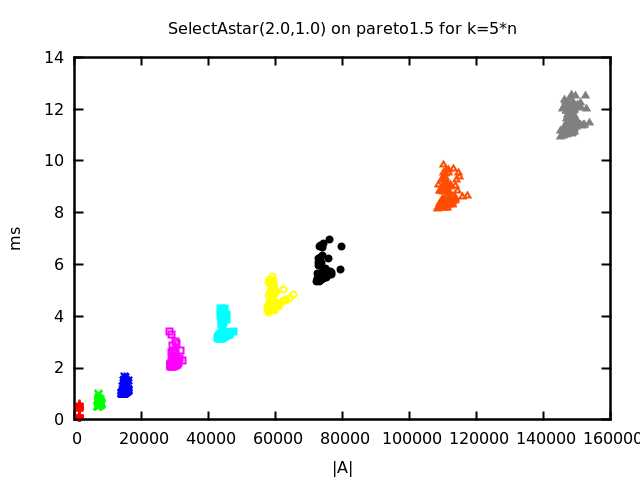}
\hfill\\
\hfill\includegraphics[width=0.45\linewidth]{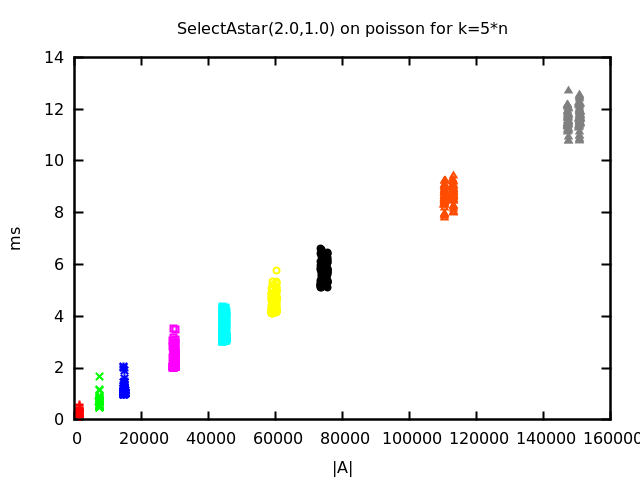}
\hfill\includegraphics[width=0.45\linewidth]{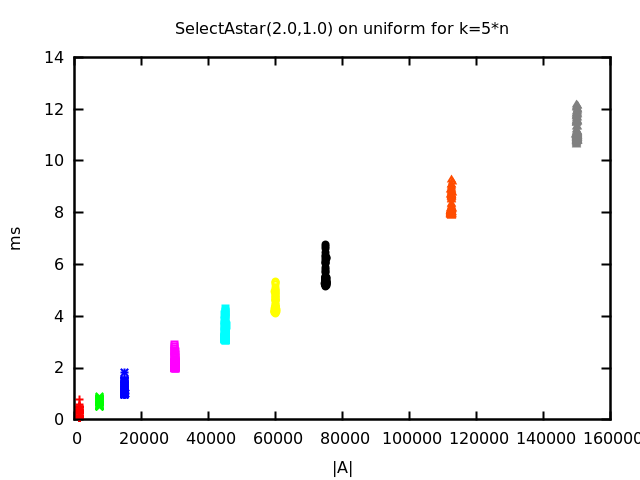}

}{%
  \IfFileExists{plots/draft/draft_plots}{%
    \input{plots/draft/draft_plots}
  }{%
    \PackageError{app-moreplots.tex}{Draft plots missing.
      Remember to run script draft/to_latex
    }{}%
  }
}

\ifarxiv{\clearpage}
\section{Index of Used Notation}
\label{app:notations}

In this section, we collect the notation used in this paper.
Some might be seen as ``standard'', but we think
including them here hurts less than a potential 
misunderstanding caused by omitting them.

\subsection*{Generic Mathematical Notation}
\begin{notations}
\notation{$\lfloor x\rfloor$, $\lceil x\rceil$}
	floor and ceiling functions, as used in \cite{ConcreteMathematics}.
\notation{$M_{(k)}$}
	The $k$th~smallest element of (multi)set/vector $M$ (assuming it exists);\\
	if the elements of $M$ can be written in non"=decreasing order, $M$ is  
	given by $M_{(1)} \le M_{(2)} \le M_{(3)} \le \cdots$.\\
	\textsl{Example:} For $M = \{ 5,8,8,8,10,10 \}$, we have $M_{(1)} = 5$, 
	$M_{(2)} = M_{(3)} = M_{(4)} = 8$, and
	$M_{(5)} = M_{(6)} = 10$.
\notation{$M^{(k)}$}
	Similar to $M_{(k)}$, but $M^{(k)}$ denotes the $k$th \emph{largest} element.
\notation{$\vect x = (x_1,\ldots,x_d)$}
	to emphasize that $\vect x$ is a vector, it is written in \textbf{bold};\\
	components of the vector are written in regular type.
\notation{$\mset M$}
	to emphasize that $\mset M$ is a multiset, it is written in calligraphic 
	type.
\notation{$\mset{M}_1 \uplus \mset{M}_2$}
	multiset union; multiplicities add up.
\end{notations}

\subsection*{Notation Specific to the Problem}
\begin{notations}
\notation{party, seat, vote (count), chamber size}
	Parties are assigned seats (in parliament), so that 
	the number of seats $s_i$ that party $i$ is assigned is 
	(roughly) proportional to that party's vote count $v_i$
	and the overall number of assigned seats equals the chamber size $k$.
\notation{$d = (d_j)_{j=0}^\infty$}
	the divisor sequence used in the highest averages method;
	$d$ must be a nonnegative, (strictly) increasing and unbounded sequence.
\notation{$\delta$, $\delta^{-1}$}
  a continuation of $j \mapsto d_j$ on the reals and its inverse, 
  both of which can be evaluated in constant time.
\notation{$n$}
	number of parties in the input.
\notation{$\vect v$, $v_i$}
	$\vect v = (v_1,\ldots,v_n) \in \Q_{>0}^n$, 
	vote counts of the parties in the input.
\notation{$V$}
  the sum $v_1 + \cdots + v_n$ of all vote counts.
\notation{$k$}
	$k \in \N$, the number of seats to be assigned; also called house size.
\notation{$\vect s$, $s_i$}
	$\vect s = (s_1,\ldots,s_n) \in \N_0$, 
	the number of seats assigned to the respective parties; the result.
\notation{$a_{i,j}$}
	$a_{i,j} \ce d_j / v_i$, the ratio used to define divisor methods; 
	$i$ is the party, $j$ is the number of seats 
	$i$ has already been assigned.\\
\notation{$A_i$}	
	For party $i$, $A_i \ce \{ a_{i,0}, a_{i,1}, a_{i,2}, \ldots \}$
	is the list of (reciprocals of) party~$i$'s ratios.\\
\notation{$a$}
	We use $a$ as a free variable when an arbitrary $a_{i,j}$ is meant.
\notation{$\mathcal A$}
	$\mathcal A \ce A_1 \uplus \cdots \uplus A_n$
	is the multiset of all averages.
\notation{$r(x, \mathcal{A})$}
  the rank of $x$ in $\mathcal{A}$, that is the number of elements in
  multiset $\mset A$ that are no larger than $x$; $r(x)$ for short if
  $\mset A$ is clear from context.
\notation{$a^*$}
	the ratio $a^* = a_{i^*\!\!,j^*}$ selected for assigning the last 
	(i.\,e.\ the $k$th) seat; corresponds to $\vect s$ by $s_i = r(a^*, A_i)$;
	$a^* = \mathcal A_{(k)}$ (cf.\ \wref{sec:notation} and \wref{sec:fast-selection-alg}).
\notation{$\overline x$}
	an upper bound $\overline x > a^*$; we use $\overline x = d_{k-1}/v_1 + \varepsilon$,
	where $\varepsilon > 0$ is a suitable constant.
\notation{$I_{\overline x}$}
	$I_{\overline x} \ce \{ i \mid v_i > d_0 / \overline x\} $; 
	the set of parties $i$ whose vote count is large enough, so that $a_{i,0} < \overline x$,
	i.\,e.\ so that they contribute to the rank of $\overline x$ in $\mathcal A$.
\notation{$V_{\overline x}$}
  the sum of the vote counts of all parties in $I_{\overline{x}}$.
\notation{$\mset{A}^{\overline{x}}$}
  the elements in $\mset A$ that are smaller than $\overline{x}$, 
  i.\,e., $\mset A \cap (-\infty, \overline{x})$.
\notation{$\underline a$, $\overline a$}
	lower and upper bounds on candidates $\underline a \le a\le \overline a$
	such that still $a^* \in \mathcal A \cap [\underline a, \overline a]$.
\end{notations}

\section{Changelog}
\label{app:errata}

The following (substantial) changes have been made from arXiv version~2 to~3.
\begin{itemize}
  \item \wref{lem:candidate-set-linear-corridor} has been strengthened;
    both $\underline a$ and the upper bound on $|\mset A \cap [\underline a, \overline a]|$
    have been improved. Both changes are due to the observation that we could
    require $\underline\beta \leq \alpha$ without loss of generality.

    Related notation update: $(\check\beta, \beta) \leadsto (\underline\beta, \overline\beta)$.
    
  \item We have added \wref{app:methods-scope} in order to clarify that the
    assumptions we make for our main result do restrict the scope of divisor
    methods we cover by too much.
\end{itemize}

\end{document}